\definecolor{WildStrawberry}{RGB}{255,67,164}
\definecolor{Dousha}{RGB}{47, 136, 67}
\definecolor{BlueBerry}{RGB}{44, 96, 189}
\newcommand*\samethanks[1][\value{footnote}]{\footnotemark[#1]}
\newtheorem{theorem}{Theorem}
\newtheorem{lemma}{Lemma}
\newtheorem{claim}{Claim}
\newcommand{\be}{\begin{equation}}
\newcommand{\ee}{\end{equation}}
\newcommand{\beq}{\begin{equation*}}
\newcommand{\eeq}{\end{equation*}}
\newcommand{\argmin}{\mathop{\rm argmin}}
\newcommand{\R}{\mathbb{R}}
\newcommand{\eps}{\varepsilon}
\newcommand{\Xcomment}[1]{{}}
\newcommand{\eqdef}{\overset{\mathrm{def}}{=\mathrel{\mkern-3mu}=}}
\newcommand{\vect}[1]{\ensuremath{\mathbf{#1}}}
\newcommand\restr[2]{{
  \left.\kern-\nulldelimiterspace 
  #1 
  \vphantom{\big|} 
  \right|_{#2} 
  }}
\def \reals {{\mathbb R}}
\newcommand{\nats}{{\mathbb N}}
\newcommand{\costi}[1][i]{\texttt{cost}_{#1}}
\newcommand{\SC}{\texttt{SC}}
\newcommand{\qnorm}[2][q]{\left\Vert #2\right\Vert_{#1}}
\newcommand{\lqnorm}[1][q]{\mathbb{L}_{#1}}
\newcommand{\mech}{\mathcal{M}}
\newcommand{\loci}[1][i]{\vect{p}_{#1}}
\newcommand{\loc}{\vect{p}}
\newcommand{\locs}{\vect{P}}
\newcommand{\locsmi}[1][i]{\vect{P}_{\text{-}#1}}
\newcommand{\vl}{\vect{l}}
\newcommand{\vli}[1][i]{\vl_{#1}}
\newcommand{\median}{\vect{m}}
\newcommand{\zerovec}{\vect{0}}
\newcommand{\tmedian}{\texttt{median}}
\newcommand{\facility}{\vect{f}}
\newcommand{\m}{\mathbf{m}}
\newcommand{\energy}[1][S]{\Delta_{{#1}}}
\newcommand{\energyc}[1][S]{{\energy[\overline{#1}]}}
\newcommand{\ecsq}{\Delta_{\overline{S}}^{1/q}}
\newcommand{\esq}{\Delta_{S}^{1/q}}
\newcommand{\energyT}{\energy[T]}
\newcommand{\energyTc}{\energyc[T]}
\newcommand{\ectq}{\Delta_{\overline{T}}^{1/q}}
\newcommand{\etq}{\Delta_{T}^{1/q}}
\newcommand{\flr}[1]{\left\lfloor #1\right\rfloor}
\newcommand{\sign}{\texttt{sign}}
\newcommand{\UB}{\texttt{UB}}
\newcommand{\LB}{\texttt{LB}}
\newcommand{\vx}{\vect{x}}
\newcommand{\CMP}{\texttt{CMP}}
\newcommand{\cmp}{\vect{m_p}}
\newcommand{\pred}{\hat{\mathbf{f}}}
\title{Approximation guarantees of Median Mechanism in $\mathbb{R}^d$}
\author{Nick Gravin\thanks{ITCS, Shanghai University of Finance and Economics} \and Jianhao Jia\samethanks}
\date{}
\begin{document}
\maketitle

\begin{abstract}
The coordinate-wise median is a classic and most well-studied strategy-proof mechanism in social choice and facility location scenarios. Surprisingly, there is no systematic study of its approximation ratio in $d$-dimensional spaces. The best known approximation guarantee in $d$-dimensional Euclidean space $\mathbb{L}_2(\mathbb{R}^d)$ is $\sqrt{d}$ via a simple argument of  embedding $\mathbb{L}_1(\mathbb{R}^d)$ into $\mathbb{L}_2(\mathbb{R}^d)$ metric space, that only appeared in appendix of [Meir 2019].
This upper bound is known to be tight in dimension $d=2$, but there are no known super constant lower bounds. Still, it seems that the community's belief about coordinate-wise median is on the side of $\Theta(\sqrt{d})$. E.g., a few recent papers on mechanism design with predictions [Agrawal, Balkanski, Gkatzelis, Ou, Tan 2022], [Christodoulou, Sgouritsa, Vlachos 2024], and [Barak, Gupta, Talgam-Cohen 2024] directly rely on the $\sqrt{d}$-approximation result.

In this paper, we systematically study approximate efficiency of the coordinate-median in $\mathbb{L}_{q}(\mathbb{R}^d)$ spaces for any $\mathbb{L}_q$ norm with $q\in[1,\infty]$ and any dimension $d$. We derive a series of constant upper bounds $UB(q)$ independent of the dimension $d$. This series $UB(q)$ is growing with parameter $q$, but never exceeds the constant $UB(\infty)= 3$. Our bound $UB(2)=\sqrt{6\sqrt{3}-8}<1.55$ for $\mathbb{L}_2$ norm is only slightly worse than the tight approximation guarantee of $\sqrt{2}>1.41$ in dimension $d=2$. Furthermore, we show that our upper bounds are essentially tight
by giving almost matching lower bounds $LB(q,d)=UB(q)\cdot(1-O(1/d))$ for any dimension $d$ with $LB(q,d)=UB(q)$ when $d\to\infty$. We also extend our analysis to the generalized median mechanism used in [Agrawal, Balkanski, Gkatzelis, Ou, Tan 2022] for $\mathbb{L}_2(\mathbb{R}^2)$ space to arbitrary dimensions $d$ with similar results for both robustness and consistency approximation guarantees.

\end{abstract}

\section{Introduction}
\label{sec:intro}
Strategic facility location is the canonical problem in the literature on mechanism design without money~\cite{procaccia2013approximate}. 
The goal in this problem is to find a good location for the new facility given preferences of $n$ strategic agents. 
This problem has been the focus of a long 
line of work in algorithmic mechanism design literature~\cite{procaccia2013approximate, FotakisT14, FotakisT16, SerafinoV16, walsh2020strategy, GkatzelisKST22}  as well as earlier body of work in social choice~\cite{moulin1980strategy,border1983straightforward,kim1984nonmanipulability,peters1993range,barbera1993generalized,ching1997strategy,peremans1997strategy,barbera1998strategy,schummer2002strategy}. In the former case the setting often served as an important domain for testing new concepts in approximate mechanism design, such as truthful mechanisms without monetary transfers~\cite{procaccia2013approximate} as well as mechanism design with predictions~\cite{GkatzelisKST22,XuL22,barak2024mac}, while in the later case it has been the primary domain with structured preferences that allowed one to escape strong impossibility results such as the Gibbard-Satterthwaite Theorem~\cite{Gibbard77,Satterthwaite75}.

%
The social choice literature very thoroughly studied strategy-proof voting scenarios with single-picked preferences~\cite{moulin1980strategy} in $\reals^1$ and its extensions~\cite{barbera1993generalized,barbera1998strategy,schummer2002strategy,ching1997strategy} usually  to $d$-dimensional Euclidean~\cite{kim1984nonmanipulability,peters1993range,border1983straightforward,peremans1997strategy} spaces $\reals^d$.
All this work crucially relies on the simple \emph{median} selection rule initially proposed by Black~\cite{Black48}. The median rule, which puts facility at the median of the agents' reported peaks, is strategy-proof. Its generalized\footnote{The mechanism simply adds a set of fixed points to the reports of all agents.} variants are the only deterministic strategy-proof mechanisms for agents residing on continuous or discrete lines~\cite{DokowFMN12,schummer2002strategy}, and it is the only strategy-proof mechanism with optimal utilitarian social cost, when agents have single-peaked preferences~\cite{moulin1980strategy}. Furthermore, the median rule naturally extends to strategy-proof \emph{coordinate-wise median} mechanism in higher dimensions $\reals^d$: facility's coordinates are separately computed as medians in each of the $d$ dimensions.
In Euclidean space $\lqnorm[2](\reals^2)$, the coordinate-wise median is the only deterministic strategy-proof mechanism that is Pareto optimal and anonymous~\cite{peters1993range}. 

The computer science community, on the other hand, has been mostly concerned with establishing approximate efficiency guarantees and extending original strategic facility location setting in multiple directions (see~\cite{chan2021mechanism} for a recent survey on the topic). The most notable extensions include randomized mechanisms, which  first appeared in~\cite{AlonFPT10} and are commonly used in general metric spaces~\cite{LuSWZ10,FotakisT10,FeldmanW13}; and 
opening of $k\ge 2$ facilities~\cite{procaccia2013approximate,lu2009,LuSWZ10,FotakisT10,EscoffierGTPS11,FotakisT16,walsh2020strategy}, which among other things showed impossibility of constant approximation deterministic mechanisms for installing $k>2$ facilities in $\reals^1$ and $k=2$ facilities in general metric spaces.
However, despite active research in the last fifteen years since the seminal work of Procaccia and Tennenholtz~\cite{procaccia2013approximate}, we do not have tight efficiency guarantees of the celebrated median mechanism in $\reals^d$. This is the question we address in our work. Before listing our results, let us first do a quick overview of what is known about approximation guarantees of the  coordinate-wise median.

\begin{description}
\item[$d=1$.] The median mechanism is deterministic and achieves $1$-approximation in $\reals^d$ for $d=1$ and utilitarian social cost~\cite{procaccia2013approximate}. 
Feigenbaum et al.~\cite{FeigenbaumSY17} showed that its approximation guarantee in $\lqnorm[p](\reals^1)$ space is $2^{1-1/p}$, i.e., when the social cost is measured as the $\lqnorm[p]$ norm of agents' utilities for $p\in[1,+\infty]$. 
\item[$d=2$.] 
The coordinate-wise median is $\sqrt{2}$-approximation to the optimum in the Euclidean space $\lqnorm[2](\reals^2)$~\cite{meir2019strategyproof}. This guarantee is tight~\cite{GoelHC23} for $d=2$. Goel and Hann-Caruthers~\cite{GoelHC23} also considered other $\lqnorm(\reals^2)$ spaces. They showed that median rule has the lowest worst-case approximation ratio among deterministic, anonymous, and strategyproof mechanisms and that its approximation ratio is between $[2^{1-\frac{1}{q}},2^{\frac{3}{2}-\frac{2}{q}}]$ for $\lqnorm(\reals^2)$ spaces.
\item[$d>2$.]  Meir~\cite{meir2019strategyproof} also showed that the $\sqrt{2}$ bound extends to $\sqrt{d}$-approximation guarantee in $\lqnorm[2](\reals^d)$ spaces for any $d\ge 1$. This result\footnote{While it could have been known before, we are only aware of one source, where it was explicitly given -- the appendix of Meir's paper~\cite{meir2019strategyproof}.} for $\lqnorm[2](\reals^d)$ easily follows from two very simple observations: coordinate-wise median is optimal in $\lqnorm[1](\reals^d)$, and the canonical embedding of $\lqnorm[1](\reals^d)$ space into $\lqnorm[2](\reals^d)$ space has distortion\footnote{We refer here to the standard mathematical term ``distortion of metric embedding'', which is calculated as the ratio between maximal and minimal multiplicative discrepancies between two distance functions.} of $\sqrt{d}$. This bound of $\sqrt{d}$ is the best known up to date (see, e.g.,~\cite{barak2024mac}), and, rather surprisingly, no matching lower bound of order $\sqrt{d}$ are known for higher dimensions.
\item[Other Metrics.] There has been a lot of interest in extending it beyond $\reals^d$ 
to a broader set of metric spaces, e.g., given by shortest path distances on a network. However, for a simple cycle metric, any onto and strategy-proof mechanism must be dictatorial~\cite{schummer2002strategy}, and the worst-case approximation ratio of any deterministic strategy-proof mechanism is of order $\Omega(n)$~\cite{DokowFMN12}. This essentially limits the scope of possible application scenarios to trees~\cite{DokowFMN12,FeldmanW13}, or to the instances with $n=O(1)$ a constant number of agents~\cite{meir2019strategyproof}. Furthermore, the appropriately defined median mechanism is strategy-proof and achieves optimal social cost on trees~\cite{FilimonovM22}.
\end{description}
In fact, a very recent line of work on mechanism design with predictions heavily relies on the $\sqrt{d}$-approximation guarantee of the median rule, either by using Meir's result in a black box manner~\cite{barak2024mac}, or significantly limiting their attention only to $\lqnorm[2](\reals^2)$ space~\cite{AgrawalBGTX22,BalkanskiGS24,ChristodoulouSV24}. It also appears that the consensus in the community about 
approximation ratio of the coordinate-wise median mechanism is on $\sqrt{d}$ side, and that we are simply missing examples with matching lower bounds.

\subsection{Our Results}
\label{sec:results}
Following~\cite{FeigenbaumSY17}, we also investigate other than Euclidean norms $\lqnorm$ in dimensions $d>1$. Thus our main question is
\begin{quote}
What are the approximation ratios of median mechanism in $\lqnorm(\reals^d)$ for any $d\in\nats$ and $q\in\reals_{\ge 1}\cup\{\infty\}$?  
\end{quote}
As it turns out, the common belief of $\sqrt{d}$ approximation is wrong for any value of $q$. Specifically, we obtain the following results:    
\begin{enumerate}
        \item We show that coordinate-wise median has approximation ratio of at most $3$ in $\lqnorm(\reals^d)$ for any values of $d$ and $q$. This improves 
        previous upper bound of $\sqrt{d}$ for the Euclidean distance, and also gives new constant upper bounds for all $\lqnorm$ norms.  
        \item We formulate the question about approximation guarantee as a certain mathematical program and then consider appropriate relaxations. In particular, it is 
        convenient to make $d\to\infty$. After carefully examining various local optimum conditions we reduce the relaxed problem to a simple system of differential equations. 
        For each value of $q\in[1,\infty)$ we find solutions to this system and derive our upper bound $\UB(q)$ on the approximation ratio. 
        \item Our analysis is tight, as we demonstrate with a series of examples that yield lower bounds $\LB(q,d)$ approaching our upper bounds $\UB(q)$ at the rate 
        $O(\frac{1}{d})$ for every $q$. For the most well studied Euclidean $\lqnorm[2]$ norm, the difference between our $\UB(2)=\sqrt{6\sqrt{3}-8}<1.55$ and the existing 
        lower bound of 
        $\sqrt{2}>1.41$ is already quite small (less than $10\%$).
        \item By improving $\sqrt{d}$-approximation to a small constant, our work implies better guarantees in the facility location setting with MAC advice of Barak, Gupta, and Talgam-Cohen~\cite{barak2024mac}. We also apply our optimization framework to the generalized median mechanism of Agrawal et al.~\cite{AgrawalBGTX22} in 
        the setting of facility location with predictions. We derive slightly worse (by a similar factor of $\approx 10\%$ than for $d=2$) approximation guarantees for both consistency and robustness of the mechanism that holds in arbitrary dimension $d$.
    \end{enumerate}

\subsection{Additional Related Work}
\label{sec:related}
%
The literature on strategic facility location is rather vast to overview it here (see two surveys on the topic from perspectives of  approximate mechanism design~\cite{chan2021mechanism} and social choice~\cite{barbera2011strategyproof}). We only mention a few directions explored in the computer science literature: various objectives on social cost like square of the $\lqnorm[2]$~\cite{FeldmanW13}, obnoxious facility~\cite{cheng2013strategy,mei2019}, dual preferences~\cite{FeigenbaumS15,zou15}, concave costs~\cite{FotakisT16}, distributed selection of multiple facility~\cite{filoratsikas2021}, false-name-proof mechanisms~\cite{todo11}. 

In the social choice literature, all characterization results assume certain form of single-picked preferences. They either formalize extensions of single-pick condition in $\reals^2$ from~\cite{moulin1980strategy} via axiomatic properties 
~\cite{barbera1993generalized,barbera1998strategy,schummer2002strategy,ching1997strategy}, or assume that preferences are given by $\lqnorm[2](\reals^d)$~\cite{kim1984nonmanipulability,peters1993range} (directly applies to strategic facility location), or work with the classes of separable quadratic preferences in $\reals^d$ space~\cite{border1983straightforward,peremans1997strategy}. The last type of preferences is described by weighted $\lqnorm[2]$ norm with potentially different weights on coordinates. While related to facility location setting in $\lqnorm[2](\reals^d)$, these classification results do not directly apply to strategic facility location, as the voters may have differently weighted quadratic preferences.

The facility location problem serves as an important case study for the new frameworks of mechanism design with predictions. Other examples, 
of learning-augmented mechanisms without payments 
include scheduling~\cite{XuL22,BalkanskiGT23,ChristodoulouSV24},  allocation of public goods~\cite{BanerjeeGHJM023}, and house allocations~\cite{ChristodoulouSV24}. There also non truthful rules for strategic facility locations with predictions that achieve good price of anarchy guarantees~\cite{ChenGI24}.

\section{Preliminaries}
\label{sec:prelim}
In a (strategic) facility location problem the goal is to place a facility $\facility$ to serve a set of $n$ agents residing in a 
$d$-dimensional vector space $\reals^d$. Each agent $i\in[n]$ reports their position $\loci=(p_{i,1},p_{i,2},\cdots,p_{i,d})\in \reals^d$ to the central planner, who upon receiving all reports $\locs=(\loci)_{i\in[n]}$ runs a mechanism $\mech$ that places the facility  $\facility=\mech(\locs)\in\reals^d$. 

We assume that the vector space $\reals^d$ is equipped with an $\lqnorm$ norm $\qnorm{\cdot}$ defined for each vector $\vx\in\reals^d$ and   $q\in\reals_{\ge 1}\cup\{+\infty\}$ as 
$
\qnorm{\vx}=\left(|x_1|^q+\ldots+|x_d|^q\right)^{1/q}
$ and $\qnorm[\infty]{\vx}=\lim\limits_{q\to\infty}\qnorm{\vx}=\max\limits_{j\in[d]}|x_j|$.  Each agent $i\in [n]$ has the most preferred location $\vli\in\reals^d$ and incurs a cost of $\costi(\facility,\vli)=\qnorm{\facility-\vli}$, when facility is placed at $\facility\in\reals^d$. Each agent $i$ may strategically misreport their location $\loci\ne\vli$ to minimize their own cost $\costi(\facility(\loci,\locsmi),\vli)$. A mechanism $\mech(\locs)$ is called \emph{strategy-proof} (also  \emph{truthful}) if truth telling $\loci=\vli$ is the dominant strategy for each $i\in[n]$ on any reports of other agents $\locsmi=(\loci[1],\ldots,\loci[i-1],\loci[i+1],\ldots,\loci[n])$.  

Thus, for a truthful mechanism $\mech$ one can assume that agent reports $\locs=(\vli)_{i\in[n]}$ are the same as their true locations. The social planner's objective is to minimize the social cost given by $\SC(\locs,\facility)\eqdef\sum_{i\in[n]}\costi(\facility,\loci)=\sum_{i\in[n]}\qnorm{\facility-\loci}$ for a facility placed at $\facility\in\reals^d$.
The performance of $\mech$ is measured as the approximation ratio of the mechanism's social cost $\SC(\locs,\mech(\locs))$ compared to the optimum. We denote the optimal placement of the facility as $\facility\eqdef\argmin\limits_{\vl\in\reals^d}\SC(\locs,\vl)\in\reals^d$. Formally, the \emph{approximation ratio} $\alpha(\mech)$ of the mechanism $\mech$ is defined as the worst-case ratio
\[
\alpha(\mech) \eqdef
\max\limits_{n,\locs}
\frac{\SC(\locs,\mech(\locs))}{\SC(\locs,\facility)}.
\]

The \emph{coordinate-wise median} mechanism $\median(\locs)$ (also 
called median-point or median mechanism) is a well-known 
truthful mechanism. It calculates a median point 
$\median=(m_1,\ldots,m_d)$ with\footnote{For even number of points $n$ the tie in the selection of the median can be broken arbitrarily (for the smaller or larger value) but consistently for each coordinate $j\in[d]$.}
\[
\forall j\in[d]\quad
m_j\eqdef\tmedian\left(p_{1,j},p_{2,j},\ldots,p_{n,j}\right).
\]
The median mechanism outputs the optimal location $\median=\facility$, if the dimension $d=1$, or if there are only $n\le 2$ agents, or in $\lqnorm[1](\reals^d)$ spaces for any $d$ and $n$.  

\section{Approximation Guarantees}
\label{sec:approx}
We first derive a family of upper bounds on the approximation ratio of the coordinate-wise median mechanism in $\lqnorm$ for each value of $q\in[1,\infty)$ (note that the approximation guarantee for $\lqnorm[\infty]$ can be easily obtained by taking the limit of approximation results for $q\to\infty$). We then conclude this section with examples that demonstrate tightness of our upper bound guarantees for each value of $q\in\reals_{\ge 1}\cup\{\infty\}$
and unlimited dimension $d$. Specifically, these examples provide a family of lower bounds $\LB(q,d)$ for each value of $q$ and $d$, which converge to our upper bound $\UB(q)$ at a rate $\LB(q,d)\ge\UB(q)-O(\frac{1}{d})$.  

To derive the upper bounds $\UB(q)$ on the approximation ratio, we write a natural optimization problem over the set of $n$ points $\locs=(\loci)_{i=1}^n$ for a fixed positions of the facility $\facility\in\reals^d$ and the constraint that $\locs$ have a given coordinate-wise median $\median\in\reals^d$. Specifically, we encode the approximation ratio as a linear objective $\SC(\locs,\facility)-\lambda\cdot\SC(\locs,\median)$ for a constant 
$\lambda\in(0,1)$ that captures the inverse of the approximation 
ratio $1/\alpha$ of the coordinate-wise median mechanism $\median(\locs)$.
Our goal is to find $\lambda(q)>0$ such that the minimum of this optimization problem is at least $0$.

We normalize the parameters $(\facility,\median)$ of our optimization problem by applying simple translation, scaling, and reflection transformations of space $\reals^d$ so that: (i) $\median=\vect{0}=(0,\ldots,0)$; (ii) optimal facility $\facility=(f_1,\ldots,f_d)$ has all positive coordinates $f_j> 0$ (if $f_j=0$ we simply let $f_j=\eps$ for arbitrary small $\eps>0$ with only a negligible increase to the value of our optimization formulation); (iii) $\qnorm{\facility}=1$. We also assume that the total number of points $n$ is even, as we can duplicate all points in our instance without changing the locations of $\facility$ and $\median$. 
Our objective function $\SC(\locs,\facility)-\lambda\cdot\SC(\locs,\median)$ can be conveniently separated into $\sum_{i\in[n]}g(\loci)$, where $
g(\loci)\eqdef \qnorm{\loci-\facility}-\lambda\cdot\qnorm{\loci}$. To capture the constraint that $\median$ is a median of the set $\locs$, we define the \emph{signature} $\sigma(\loc)\in\{-1,1\}^d$ of a point $\loc=(p_1,\ldots,p_d)\in\reals^d$, where $\sigma(\loc)\eqdef(\sign(p_1),\ldots,\sign(p_d))$.    
The sign function $\sign(p_{j})=1$ if $p_{j}>0$ and $\sign(p_{j})=-1$ if $p_{j}<0$; when $p_{j}=0$, the sign can take either value $1$ or $-1$ (in the former case we write $p_{j}=0^+$ , and in the latter case we write $p_{j}=0^-$). Hence, our optimization problem can be succinctly written as follows.
\begin{align}
 \label{eq:main}
 \min\limits_{\locs}& \sum_{i\in[n]} g(\loci), &\quad 
 where~ g(\loci)\eqdef\qnorm{\loci-\facility}-\lambda\cdot\qnorm{\loci} \nonumber\\
 \text{s.t.}&  \sum_{i\in [n]}\sigma(\loci) = \vect{0}, & 
\end{align}
We first observe that \eqref{eq:main} indeed achieves minimum (not an infinum) at a certain configuration of points $\locs=(\loci[1],\ldots,\loci[n])$ for each fixed value of $\lambda=\lambda(q)\in(0,1)$.
\begin{claim}
    \label{cl:compact set}
    The optimization problem~\eqref{eq:main} attains minimum.
\end{claim}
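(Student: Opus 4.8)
The plan is to show that the feasible region of~\eqref{eq:main} can be restricted to a compact set without changing the infimum, and then invoke continuity of the objective. The objective $\sum_i g(\loci)$ is continuous in the $\loci$'s (it is a finite sum of norms, which are continuous), so by the Weierstrass extreme value theorem it suffices to exhibit a \emph{compact} subset of the feasible set on which the infimum is already attained as a value, i.e.\ such that any configuration outside this set can be modified — staying feasible — without increasing the objective. There are two issues to handle: (a) the number of points $n$ is not fixed, so a priori the infimum is over configurations of unbounded size; and (b) even for fixed $n$, the feasible set $\{\locs : \sum_i \sigma(\loci) = \vect{0}\}$ is unbounded, since points may fly off to infinity.

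First I would deal with (b) for fixed $n$. The key observation is that each summand $g(\loc) = \qnorm{\loc - \facility} - \lambda\qnorm{\loc}$ satisfies, by the triangle inequality, $g(\loc) \ge \qnorm{\loc-\facility} - \lambda\qnorm{\loc-\facility} - \lambda\qnorm{\facility} = (1-\lambda)\qnorm{\loc - \facility} - \lambda$, and since $\lambda\in(0,1)$ this tends to $+\infty$ as $\qnorm{\loc}\to\infty$. Hence $g$ is coercive and bounded below, say $g(\loc)\ge -\lambda$ everywhere (in fact $g(\vect{0}) = \qnorm{\facility} = 1 > 0$, so the optimal value of~\eqref{eq:main} is at most $n$, but we only need a lower bound on $g$). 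Fix any feasible configuration $\locs$. If some point $\loci$ has $\qnorm{\loci}$ exceeding a threshold $R = R(n,\lambda)$ large enough that $g(\loci) > n$, then the total objective is already $> n + (n-1)(-\lambda) > n - n = 0$, which is worse than the benchmark configuration with all points at $\vect{0}$ except one at $\facility$ (wait — I must check this benchmark is feasible; with $n$ even, place $n/2$ points at $0^+$-signed copies of $\vect{0}$ and $n/2$ at $0^-$-signed copies, all located at the origin, giving objective $n\cdot 1 = n$, and the signatures sum to $\vect 0$). Thus the infimum over feasible configurations equals the infimum over the compact set $\{\locs : \sum_i\sigma(\loci)=\vect 0,\ \qnorm{\loci}\le R\ \forall i\}$ — which is closed (the signature constraint is closed because we allow $\sigma$ to take either value at coordinate zero, so the constraint is really $\sum_i \sigma(\loci) = \vect 0$ \emph{for some choice of signs at the zero coordinates}, a finite union of closed halfspace-intersections) and bounded, hence compact. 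Continuity of the objective then gives a minimizer.

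For (b) I should be slightly careful that the feasible set is genuinely closed: a sequence of feasible $\locs^{(k)}\to\locs$ might have coordinates that are nonzero along the sequence but vanish in the limit, yet because the sign at a zero coordinate is \emph{free} to be chosen as $+1$ or $-1$, the limit point is still feasible (pick the sign that was used infinitely often along the sequence). This is exactly why the problem was set up with $0^+$ and $0^-$, and it is the one subtle point — I would state it explicitly as the main thing to verify.

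Finally, for (a), the dependence on $n$: I claim the infimum over \emph{all} even $n$ is attained at some finite $n$. Indeed, adding a pair of points located at $\vect{0}$ (one $0^+$, one $0^-$) keeps the configuration feasible and adds exactly $2g(\vect 0) = 2\qnorm{\facility} = 2 > 0$ to the objective, so it never helps; more to the point, the optimal value of~\eqref{eq:main} for a given $n$ is at most $n$ (the all-at-origin configuration), and is at least $-\lambda n$ trivially, but we can do better: I would argue that the per-point contribution in an optimal configuration is bounded below by a constant independent of $n$, or simply note that~\eqref{eq:main} with $n=2$ already has value $\ge 0$ would be too strong. The cleanest route: the infimum of the normalized objective $\frac1n\sum_i g(\loci)$ over all feasible configurations with the median/facility normalization is what controls the approximation ratio, and once we fix the reduction, it suffices to observe that for the \emph{purpose of this claim} we only assert that~\eqref{eq:main} attains its min for each fixed $\lambda$; if $n$ is treated as part of the instance (fixed), the argument of the previous paragraph finishes it, and if $n$ ranges freely, the monotonicity argument (never beneficial to add origin-pairs, and the objective is bounded below by $-\lambda n$ while bounded above by $n$ via the benchmark, and a minimizing configuration can be assumed to have no "wasted" points at the origin) reduces to finitely many values of $n$, on each of which we already have a minimizer. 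The main obstacle, as flagged, is the subtle closedness of the signature constraint at zero coordinates; the coercivity and the $n$-reduction are routine.
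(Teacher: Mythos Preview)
Your approach is essentially the paper's: bound $g$ below via the triangle inequality, exhibit the all-at-origin benchmark with objective value $n$, and argue that any configuration with a far-away point exceeds this benchmark, so the infimum is taken over a bounded (and, as you carefully observe, closed) set. Two small corrections are in order. First, your threshold ``$g(\loci)>n$'' yields total objective $> n-(n-1)\lambda$, which is \emph{below} the benchmark $n$, not above it; you need $g(\loci)>n+(n-1)\lambda$ (the paper simply takes $R=2n/(1-\lambda)$, so $g(\loci)>2n-1$ and the total exceeds $2n-1-(n-1)=n$). Second, your part~(a) is unnecessary: $n$ is fixed in~\eqref{eq:main}, so there is no optimization over configuration size to worry about. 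Your explicit verification that the signature constraint is closed at zero coordinates is a point the paper leaves implicit.
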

\begin{proof}
Consider a closed ball $B_R(\zerovec)\subset{\R^d}$ centered at $\zerovec$ with radius $R=\frac{2n}{1-\lambda}$. The continuous function $\sum_{i\in[n]}g(\loci)$ attains minimum on a compact set $\left\{\locs\in B_R(\zerovec)^n ~\mid~\sum_{i\in[n]}\sigma(\loci)=\vect{0}\right\}$.
Note that by triangle inequality $g(\loc)\ge\qnorm{\loc}-
\qnorm{\facility}-\lambda\cdot\qnorm{\loc}=(1-
\lambda)\cdot\qnorm{\loc}-\qnorm{\facility}\ge -1$ for any 
$\loc\in\reals^d$. By setting $\locs=
(\vect{0},\ldots,\vect{0})$ we get the value of 
$n\cdot\qnorm{\facility}=n$ in \eqref{eq:main}. Now, if any of the points $\loc\in\locs$ is outside of $B_R(\zerovec)$, then $g(\loc)> 2n-1$ and $\sum_{i\in[n]}g(\loci)> 2n-1 + (-1)\cdot (n-1)=n$, i.e., any $\locs$ outside of our compact set achieves larger value than $\locs=(\zerovec,\ldots,\zerovec)$. 
\end{proof}

To solve~\eqref{eq:main}, we first explore the local properties of the optimal solution. Namely, we fix signatures of all locations $\loc\in\locs$ and focus on a local optimization of $g(\loc)$ for a single location $\loc$ with a fixed signature $\sigma(\loc)=\sigma^o\in\{-1,1\}^d$. 
I.e., we find $\loc=\argmin\limits_{\vl:\sigma(\vl)=\sigma^o} g(\vl)$. 

\subsection{Optimization on individual points}
\label{sec:upper_local}
For a fixed signature $\sigma^o$ of $\loc$ we define 
$S(\loc)$ and $\overline{S}(\loc)$ as
$$
S\eqdef\{j\in [d]\ \vert \sigma_{j}^{o}=1\}
\quad\quad
\overline{S}\eqdef\{j\in [d]\ \vert \sigma_j^o=-1\}
$$
the sets of respectively positive and negative coordinates in $\loc$. Therefore, we need to minimize $g(\loc)$ for a $\loc\in\reals^d$ with  $p_j\in[0^+,\infty)$ when $j\in S=S(\loc)$ and $p_j\in(-\infty,0^-]$ when $j\in\overline{S}=\overline{S}(\loc)$. 

We begin by making a simple observation that the optimal $\loc$ may not have $p_j\notin[0^+,f_j)$ for each coordinate $j$. Namely,
\begin{claim}
    \label{cl:coordinate-wise condition}
    We have $p_j\in[f_j,+\infty)$ for each $j\in S$.
\end{claim}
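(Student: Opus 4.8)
The plan is a simple one-coordinate exchange (``clamping'') argument that exploits the fact that, once the signature $\sigma^o$ is fixed, $g$ is essentially separable across coordinates. Fix the minimizer $\loc$ of $g$ over the orthant $\{\vl:\sigma(\vl)=\sigma^o\}$ (well-defined by coercivity of $g$, as in Claim~\ref{cl:compact set}), fix a coordinate $j\in S$, and suppose toward a contradiction that $p_j\in[0,f_j)$. I will produce a feasible point with strictly smaller $g$-value: namely the point $\loc'$ obtained from $\loc$ by replacing its $j$-th coordinate with $f_j$ and keeping every other coordinate unchanged.

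First, $\loc'$ is feasible for the local problem: since $f_j>0$, the $j$-th coordinate of $\loc'$ is positive and all other signs are as in $\loc$, so $\sigma(\loc')=\sigma^o$. Next, isolate the dependence of the two norms on the $j$-th coordinate. With $C\eqdef\sum_{k\ne j}|p_k-f_k|^q\ge 0$ and $D\eqdef\sum_{k\ne j}|p_k|^q\ge 0$ we have $\qnorm{\loc-\facility}=\bigl(C+|p_j-f_j|^q\bigr)^{1/q}$ and $\qnorm{\loc}=\bigl(D+p_j^q\bigr)^{1/q}$, with the corresponding expressions for $\loc'$ obtained by substituting $f_j$ for the $j$-th coordinate. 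On the range $t\in[0,f_j]$ the map $t\mapsto\bigl(C+|t-f_j|^q\bigr)^{1/q}$ is strictly decreasing and the map $t\mapsto\bigl(D+t^q\bigr)^{1/q}$ is strictly increasing; hence passing from $p_j$ to $f_j$ strictly decreases $\qnorm{\loc-\facility}$ and strictly increases $\qnorm{\loc}$. Since $\lambda>0$, both terms of $g=\qnorm{\loc-\facility}-\lambda\qnorm{\loc}$ strictly decrease, so $g(\loc')<g(\loc)$, contradicting optimality of $\loc$. Therefore $p_j\ge f_j$.

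The only places that need a sentence of care are the two strict-monotonicity claims in the degenerate cases $C=0$ (resp.\ $D=0$) or $q=1$, where the derivative of $(A+t^q)^{1/q}$ may vanish at an endpoint but the function is nonetheless strictly monotone on the relevant interval, so the two inequalities above really are strict. I do not expect any genuine obstacle here: the whole content of the claim is that for $j\in S$ the two ``forces'' acting on $p_j$ --- moving farther from $\facility$ in coordinate $j$, and moving toward $\median=\zerovec$ in coordinate $j$ --- both increase $g$ as long as $p_j<f_j$, so they reinforce rather than compete, and the clamp to $f_j$ is the first point where they would start to compete. (This reinforcement is special to $j\in S$; for $j\in\overline{S}$ the analogous two forces pull in opposite directions, so the corresponding coordinate-wise statement requires a genuinely different argument.)
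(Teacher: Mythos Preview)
Your argument is correct and is essentially identical to the paper's own proof: both clamp the offending coordinate $p_j$ up to $f_j$, note that the signature is preserved, and observe that $\qnorm{\loc-\facility}$ strictly decreases while $\qnorm{\loc}$ strictly increases, so $g$ strictly drops. The paper states this in two lines without the extra care about endpoint degeneracies, but the content is the same.
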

\begin{proof}
If there exists a coordinate $j$ with $p_j\in[0^+,f_j)$, we can change $p_j$ to $f_j$. This does not affect the signature $\sigma(\loc)$. At the same time, it decreases $|p_j-f_j|$ making $\qnorm{\loc-\facility}$ smaller, and it increases $p_j$ making the value of $\qnorm{\loc}$ larger. I.e., $g(\loc)$ decreases, which means that $\loc$ was not optimal.
\end{proof}

We next explore local optimum conditions of $g(\loc)$, which allow us to explicitly find the values of each coordinate $p_j$, when $S\ne\emptyset$ and also narrow down the set of candidate locations for the case of $S=\emptyset$. The explicit form of optimal $\loc$ with given sets of positive and negative coordinates is summarized in the next Lemma~\ref{lm:local optimal condition}. We first define a non negative number $c(\loc)$ as  
    \be
    \label{eq:constant_c}
    c\eqdef
    \lambda^\frac{1}{q-1}\cdot\frac{\qnorm{\loc-\facility}}{\qnorm{\loc}} 
    \ee  
\begin{lemma}[Optimal $\loc$ with given signature $\sigma^o$]
    \label{lm:local optimal condition}
    If
     \begin{enumerate}[label=(\roman*)]
        \item $S(\loc)\ne\emptyset$. Then $\frac{p_j-f_j}{p_j} = c$ for $j\in S$,  $p_j=0^-$ for $j\in \overline{S}$;
        \item$S(\loc)=\emptyset$. Then $\exists\ T\subseteq [d]$ such that $\frac{-p_j+f_j}{-p_j}=c$ for $j\in T$, $p_j=0^-$ for $j\notin T$.
    \end{enumerate}
\end{lemma}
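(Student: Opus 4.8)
The plan is to fix a minimizer $\loc^\star=(p_1^\star,\dots,p_d^\star)$ of $g$ over the closed orthant prescribed by $\sigma^o$ — such a minimizer exists by exactly the compactness argument of Claim~\ref{cl:compact set} applied to the orthant, since $g(\loc)\ge(1-\lambda)\qnorm{\loc}-1\to\infty$ there — and then to read off the stated conditions one coordinate at a time, using that $p_j^\star$ minimizes the one–dimensional restriction $t\mapsto g(\loc^\star\text{ with its }j\text{-th coordinate replaced by }t)$. For $q>1$ the maps $t\mapsto\qnorm{\loc-\facility}$ and $t\mapsto\qnorm{\loc}$ are differentiable at $t=p_j^\star$ as soon as $p_j^\star\ne f_j$ and $p_j^\star\ne0$ respectively, so at such $j$ the first–order condition $\partial g/\partial p_j=0$ applies; the two exceptional values $p_j^\star\in\{0,f_j\}$ will turn out to be exactly the claimed boundary cases.

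For part~(i), $S\ne\emptyset$. By Claim~\ref{cl:coordinate-wise condition} every $j\in S$ has $p_j^\star\ge f_j>0$, hence $\loc^\star\ne\zerovec$, $\qnorm{\loc^\star}>0$, and $c=c(\loc^\star)$ is well defined. Fix $j\in S$. If $p_j^\star>f_j$, differentiating gives $\partial g/\partial p_j=(p_j-f_j)^{q-1}/\qnorm{\loc-\facility}^{q-1}-\lambda\,p_j^{q-1}/\qnorm{\loc}^{q-1}=0$, and taking $(q-1)$-th roots of the nonnegative quantities on both sides rearranges this to exactly $\tfrac{p_j^\star-f_j}{p_j^\star}=\lambda^{1/(q-1)}\tfrac{\qnorm{\loc^\star-\facility}}{\qnorm{\loc^\star}}=c$. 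If instead $p_j^\star=f_j$, I check the right derivative of the restriction at $t=f_j$: when $\loc^\star\ne\facility$ one has $\qnorm{\loc^\star-\facility}>0$, so the right derivative of $t\mapsto\qnorm{\loc^\star-\facility}$ vanishes and that one–sided derivative equals $-\lambda f_j^{q-1}/\qnorm{\loc^\star}^{q-1}<0$, contradicting optimality; hence $p_j^\star=f_j$ can only happen when $\loc^\star=\facility$ (forcing $S=[d]$), in which case $c=0$ and $\tfrac{p_j^\star-f_j}{p_j^\star}=0=c$ again. Since $c$ depends only on $\loc^\star$, one and the same constant works for all $j\in S$, which gives the $S$-part of~(i). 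For the negative coordinates, suppose toward a contradiction that some $j\in\overline S$ had $p_j^\star<0$; then $p_j^\star$ is interior for its variable and, using $f_j>0>p_j^\star$, the first–order condition reads $-(f_j-p_j)^{q-1}/\qnorm{\loc-\facility}^{q-1}+\lambda(-p_j)^{q-1}/\qnorm{\loc}^{q-1}=0$, i.e.\ $\tfrac{f_j-p_j^\star}{-p_j^\star}=c$; but the left side is $1+f_j/(-p_j^\star)>1$, so $c>1$, whereas any $k\in S$ (which exists) gives $c=\tfrac{p_k^\star-f_k}{p_k^\star}=1-f_k/p_k^\star<1$ — impossible. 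Hence $p_j^\star=0=0^-$ for every $j\in\overline S$.

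For part~(ii), $S=\emptyset$, so $\loc^\star\le\zerovec$ coordinatewise. Let $T=\{j:p_j^\star<0\}$. If $T=\emptyset$ then $\loc^\star=\zerovec$ and ``$p_j=0^-$ for all $j\notin T$'' holds trivially; otherwise $\loc^\star\notin\{\zerovec,\facility\}$, so $c$ is well defined, and for each $j\in T$ the interior first–order condition computed exactly as above reads $\tfrac{-p_j^\star+f_j}{-p_j^\star}=c$, while $p_j^\star=0^-$ for $j\notin T$ by definition of $T$.

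I expect the derivative computations to be entirely routine (they are just differentiations of $\ell_q$-norms), and the one genuinely delicate point to be the sign/magnitude comparison in part~(i): a positive optimal coordinate forces $c<1$ whereas a strictly negative one would force $c>1$, and it is this incompatibility — together with the careful treatment of the non-smooth boundary values $p_j=0$ and $p_j=f_j$, and of the degenerate minimizers $\loc^\star=\zerovec$ and $\loc^\star=\facility$ — that carries the whole argument.
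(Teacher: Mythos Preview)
Your proposal is correct and follows essentially the same approach as the paper: set the coordinate-wise partial derivatives of $g$ to zero to obtain the relations $\tfrac{p_j-f_j}{p_j}=c$ and $\tfrac{f_j-p_j}{-p_j}=c$, handle the boundary value $p_j=f_j$ via the one-sided derivative, and then use the incompatibility $c<1$ (from any $k\in S$) versus $c>1$ (from any strictly negative coordinate) to force $p_j=0^-$ on $\overline{S}$ in case~(i). Your write-up is in fact a bit more careful than the paper's about the degenerate situations $\loc^\star=\facility$ and $\loc^\star=\zerovec$, but the argument is the same.
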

\begin{proof}
We distinguish the optimum solution $\loc$ with the variable $\vl$ in the optimization problem for $g(\vl)$. It is important to note that the function $g(\vl)=\qnorm{\vl-\facility}-\lambda\cdot\qnorm{\vl}$ is differentiable in $l_j$ when $l_j\notin \{0,f_j\}$. Hence, for every coordinate $p_j\notin\{0,f_j\}$ of the optimum  we have $
\frac{\partial}{\partial l_j}g(\vl)\big\vert_{\vl=\loc}=0
$. I.e., 
\begin{equation*}
    \left\{
    \begin{aligned}
    &\frac{\partial}{\partial l_j}g(\vl)\big\vert_{\vl=\loc}=\frac{(p_j-f_j)^{q-1}}{\qnorm{\loc-\facility}^{q-1}}-\lambda\cdot\frac{p_j^{q-1}}{\qnorm{\loc}^{q-1}}=0,\ &p_j>f_j\\
    &\frac{\partial}{\partial l_j}g(\vl)\big\vert_{\vl=\loc}=-\frac{(-p_j+f_j)^{q-1}}{\qnorm{\loc-\facility}^{q-1}}+\lambda\cdot\frac{(-p_j)^{q-1}}{\qnorm{\loc}^{q-1}}=0,\  &p_j< 0
    \end{aligned}
    \right.
\end{equation*}
Thus, for $c$ defined as in~\eqref{eq:constant_c} we get 
\begin{equation}
\label{eq:equation_for_c}
    \left\{
    \begin{aligned}
    &c = \frac{p_j-f_j}{p_j},\ &\text{when }p_j\in (f_j,+\infty)\\
    &c = \frac{f_j-p_j}{-p_j},\  &\text{when }p_j\in (-\infty,0)
    \end{aligned}
    \right.
\end{equation}
Furthermore, if $p_j=f_j$, the function $g(\vl)$ has the right-hand partial derivative $\frac{\partial}{\partial_{+}l_j}g(\vl)$ at $\vl=\loc$. As $\loc$ is the local minimum, $\frac{\partial}{\partial_+ l_j}g(\vl)\big\vert_{\vl=\loc}=\frac{(p_j-f_j)^{q-1}}{\qnorm{\loc-\facility}^{q-1}}-\lambda\cdot\frac{p_j^{q-1}}{\qnorm{\loc}^{q-1}}\ge 0$ when $p_j=f_j$, i.e., 
$
\frac{p_j-f_j}{p_j}\ge
\lambda^{\frac{1}{q-1}}\cdot\frac{\qnorm{\loc-\facility}}{\qnorm{\loc}}
$. As $c=\frac{\qnorm{\loc-\facility}}{\qnorm{\loc}}\ge 0$ and $p_j=f_j$, we get that $0=\frac{p_j-f_j}{p_j}\ge c\ge 0$. This means that $c=0$ and that \eqref{eq:equation_for_c} holds for a larger range of $p_j\in[f_j,+\infty)$.

Next, observe that $c = \frac{p_j-f_j}{p_j}<1$ in \eqref{eq:equation_for_c} when $p_j\in [f_j,\infty)$ (as $f_j>0$), and that $c=\frac{-p_j+f_j}{-p_j}>1$ when $p_j\in (-\infty, 0)$ for the same 
constant $c=\lambda^\frac{1}{q-1}\cdot\frac{\qnorm{\loc-\facility}}{\qnorm{\loc}}$. Hence, $\loc$ cannot 
simultaneously have coordinates $p_j\in[f_j,\infty)$ and $p_i\in(-\infty,0)$ for $i,j\in[d]$. I.e., when $S(\loc)\ne\emptyset$, we have $c<1$, which means that $p_j=0^-$ for any $j\in\overline{S}$ and $c=\frac{p_j-f_j}{p_j}$ for any $j\in S$. This concludes the proof for the 
case (i). On the other hand, when $S(\loc)=\emptyset$, then equation \eqref{eq:equation_for_c} gives us (ii) for $T\eqdef\{j\in [d] ~|~p_j < 0\}$.
\end{proof}

Lemma~\ref{lm:local optimal condition}, while describing the optimal point $\loc\in\reals^d$ with given signature $\sigma(\loc)$ (equivalently, with given sets $S(\loc)$ and $\overline{S}(\loc)$), still does not explain the relation between this signature and the respective value of $g(\loc)$ in the global optimization problem~\eqref{eq:main}. We express $g(\loc)$ for the optimal point $\loc$ as the function of $\energy[S]$ (and $\energyc[S]$) given by\footnote{We generally use the notation $\energy[X]\eqdef\sum_{j\in X}{f_j}^q$ for any set $X\subseteq[d]$.} 
\[
\energy[S]\eqdef\sum_{j\in S}{f_j}^q\quad\quad
\energyc[S] \eqdef\sum_{j\in \overline{S}}f_j^q=1-\energy[S]
\]
in the next Lemma~\ref{lm:expression for local optimum}, which is essential for handling the main optimization problem~\eqref{eq:main}.  
\begin{lemma}[Expression of $g(\loc)$ as a function of $\Delta_{_S}$]
    \label{lm:expression for local optimum}
    For an optimal $\loc$ with 
    given sets of positive $S$ and negative $\overline{S}$ coordinates, if 
    \begin{enumerate}[label=(\roman*)]
        \item $S(\loc)\ne\emptyset$, then 
        $g(\loc)=\Big(1-\lambda^{q/(q-1)}\Big)^{(q-1)/q}\cdot\ecsq-\lambda\cdot\esq$.
        \item $S(\loc)=\emptyset$, then 
        $g(\loc)\geq\Big(1-\lambda^{q/(q-1)}\Big)^{(q-1)/q}$.
    \end{enumerate}
\end{lemma}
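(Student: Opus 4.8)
The plan is to substitute the explicit description of the optimal $\loc$ from Lemma~\ref{lm:local optimal condition} into $g(\loc)=\qnorm{\loc-\facility}-\lambda\qnorm{\loc}$ and simplify; it is convenient to write $\mu\eqdef\lambda^{q/(q-1)}\in(0,1)$, so the targets read $(1-\mu)^{(q-1)/q}\ecsq-\lambda\esq$ and $(1-\mu)^{(q-1)/q}$, and to note that raising definition~\eqref{eq:constant_c} to the $q$-th power gives the identity $c^q=\mu\cdot\qnorm{\loc-\facility}^q/\qnorm{\loc}^q$. In case (i), Lemma~\ref{lm:local optimal condition}(i) (together with the bound $c\in[0,1)$ established in its proof) gives $p_j=f_j/(1-c)$ for $j\in S$ and $p_j=0$ for $j\in\overline S$, so a coordinatewise computation yields $\qnorm{\loc-\facility}^q=\tfrac{c^q}{(1-c)^q}\energy[S]+\energyc[S]$ and $\qnorm{\loc}^q=\tfrac{1}{(1-c)^q}\energy[S]$. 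Plugging these into the $c^q$-identity and clearing denominators isolates the single relation $\tfrac{c^q}{(1-c)^q}\energy[S]=\tfrac{\mu}{1-\mu}\energyc[S]$, after which the norms collapse to $\qnorm{\loc-\facility}=\ecsq/(1-\mu)^{1/q}$ and, combining $\qnorm{\loc}=\esq/(1-c)$ with $\qnorm{\loc-\facility}=(c/\lambda^{1/(q-1)})\qnorm{\loc}$, $\qnorm{\loc}=\esq+\lambda^{1/(q-1)}\ecsq/(1-\mu)^{1/q}$. Substituting these two expressions into $g(\loc)$ and using $\lambda\cdot\lambda^{1/(q-1)}=\mu$ and $(1-\mu)/(1-\mu)^{1/q}=(1-\mu)^{(q-1)/q}$ gives exactly the claim in (i).

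For case (ii), $S=\emptyset$. If the set $T$ of Lemma~\ref{lm:local optimal condition}(ii) is empty, then $\loc=\zerovec$ and $g(\loc)=\qnorm{\facility}=1>(1-\mu)^{(q-1)/q}$, so assume $T\neq\emptyset$. The same computation as in case (i), now with $p_j=f_j/(1-c)$ for $j\in T$, $p_j=0$ for $j\notin T$, and $c>1$ (which is forced, since $p_j<0$ and $f_j>0$), gives $\qnorm{\loc-\facility}^q=\energyTc/(1-\mu)$ and $\qnorm{\loc}=\lambda^{1/(q-1)}\ectq/(1-\mu)^{1/q}-\etq$, hence $g(\loc)=(1-\mu)^{(q-1)/q}\ectq+\lambda\etq$. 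Writing $x\eqdef\energyT\in(0,1)$, so that $\energyTc=1-x$, this equals $\phi(x)\eqdef(1-\mu)^{(q-1)/q}(1-x)^{1/q}+\lambda x^{1/q}$, and since $\phi(0)=(1-\mu)^{(q-1)/q}$ is precisely the target, it is enough to show $\phi$ is nondecreasing on the range of $x$ that can actually arise.

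The last point is where I expect the only real care is needed. Differentiating, $\phi'(x)\ge 0$ is equivalent (after using $\lambda=\mu^{(q-1)/q}$ and that $t\mapsto t^{(q-1)/q}$ is increasing) to $(1-x)/x\ge(1-\mu)/\mu$, i.e.\ to $x\le\mu$; so one must check that every admissible configuration has $\energyT<\mu$. This follows by tracking $c$: equating the two formulas for $\qnorm{\loc}$ forces $c=\rho/(\rho-1)$ with $\rho=\bigl(\tfrac{\mu\,\energyTc}{(1-\mu)\,\energyT}\bigr)^{1/q}$, and the requirement $c\ge 0$ (equivalently $c>1$, as it must be here) holds precisely when $\rho>1$, i.e.\ when $\energyT<\mu$. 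On that interval $\phi'\ge0$, so $g(\loc)=\phi(\energyT)\ge\phi(0)=(1-\mu)^{(q-1)/q}$, which completes (ii). Everything apart from this monotonicity/admissibility bookkeeping is routine $\lqnorm$-norm algebra under the substitutions of Lemma~\ref{lm:local optimal condition}.
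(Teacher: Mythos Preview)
Your proof is correct and follows essentially the same route as the paper: substitute the explicit coordinates from Lemma~\ref{lm:local optimal condition} into the two norms, use the defining relation for $c$ to eliminate it, and in case~(ii) reduce to a one-variable function of $x=\energyT$ on the admissible range $x<\mu=\lambda^{q/(q-1)}$ determined by $c>1$. The only (minor) difference is the endgame in case~(ii): the paper observes that $t(x)=(1-\mu)^{(q-1)/q}(1-x)^{1/q}+\lambda x^{1/q}$ is concave and hence minimized at an endpoint of $[0,\mu]$, while you instead compute $\phi'(x)\ge 0\iff x\le\mu$ to get monotonicity directly---both arguments are valid and amount to the same computation, since your derivative condition is exactly what locates the unique maximum of the concave $t$ at $x=\mu$.
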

\begin{proof}[Proof sketch]
The proof for the most part is elementary algebraic manipulations. We only present key steps and defer detailed derivations to Appendix~\ref{sec:appendix}.
When $S(\loc)\neq\emptyset$, according to Lemma~\ref{lm:local optimal condition}, $p_j=f_j/(1-c)$ for $j\in S$. Then $g(\loc)=\qnorm{\loci-\facility}-\lambda\cdot\qnorm{\loci}$ is
\begin{align}
\label{eq:g(p)}
g(\loc)=\left(\left(\frac{c}{1-c}\right)^q\cdot\esq+\ecsq\right)^{1/q}-\lambda\cdot\frac{1}{1-c}\cdot\ecsq.
\end{align}
Using that $c=\lambda^{1/(q-1)}\cdot\qnorm{\loc-\facility}/\qnorm{\loc}$, we obtain the following equations that allow us to replace terms dependent on $c$ in \eqref{eq:g(p)}:
\begin{align*}
&\frac{c}{1-c}=\left(\frac{\energyc}{\energy}\right)^{1/q}\cdot\left(\frac{\lambda}{1-\lambda^{q/(q-1)}}\right)^{1/q},\quad\text{and}\quad\frac{1}{1-c}=\frac{c}{1-c}+1.
\end{align*}

\[
\text{Then,}\quad g(\loc)=\left(\left(\frac{c}{1-c}\right)^q\cdot\energy+\energyc\right)^{1/q}-\lambda\cdot\frac{1}{1-c}\cdot\ecsq
=\left(1-\lambda^{q/(q-1)}\right)^{(q-1)/q}\cdot\ecsq-\lambda\cdot\esq.
\]
This concludes the proof for the case $S(\loc)\ne\emptyset$. When $S(\loc)=\emptyset$ and $T=\{j~|~p_j< 0^-\}\neq\emptyset$, we obtain similar expression of $g(\loc)$ for a fixed set $T$: 
\begin{align*}
g(\loc)=
        \left(1-\lambda^{q/(q-1)}\right)^{(q-1)/q}\cdot(1-\energyT)^{1/q}+\lambda\cdot\etq.
\end{align*}
We want to understand what is the minimum of the above expression for $g(\loc)$ over 
all possible $T\subseteq[d]$, $T\ne\emptyset$ (when $T=\emptyset$, $g(\loc)=1$). Let 
$x\eqdef\energyT$, then $x\in[0,1]$ and $g(\loc)=t(x)\eqdef\left(1-\lambda^{q/(q-1)}\right)^{(q-1)/q}\cdot(1-x)^{1/q}+\lambda\cdot x^{1/q}$. Note that when 
$T\ne\emptyset$, we have $c>1$ in Lemma~\ref{lm:local optimal condition}, which 
implies that $\energyT\le\lambda^{q/(q-1)}$. Furthermore, as $t(x)$ is a positive 
linear combination of concave functions $x^{1/q}$ and $(1-x)^{1/q}$, the function 
$t(x)$ is concave on $x\in[0,\lambda^{q/(q-1)}]$ and, thus, achieves its minimum at 
the end points $x=0$, or $x=\lambda^{q/(q-1)}$. I.e., $t(0)=\left(1-\lambda^{q/(q-1)}\right)^{(q-1)/q}< 1=t(\lambda^{q/(q-1)})$ is the minimum value. Interestingly, we 
cannot achieve $x=\energyT=0$ for a nonempty $T$ and, on the other hand, when $T=\emptyset$ the point $\loc=\zerovec$ is not a local optimum (in this case $g(\zerovec)=1>\left(1-\lambda^{q/(q-1)}\right)^{(q-1)/q}$). We still get the desired lower bound on $g(\loc)\ge\left(1-\lambda^{q/(q-1)}\right)^{(q-1)/q}$ for the case $S(\loc)=\emptyset$.
\end{proof}

\subsection{Upper Bounds on Approximation Ratios}
\label{sec:upper_global}
We finally derive the family of upper bounds $\UB(q)$ on the approximation ratios $\alpha(q)$ of the median mechanism in $\lqnorm(\reals^{d})$.
\begin{theorem}
    \label{thm:UB}
    The median mechanism in the normed vector space $\lqnorm(\reals^{d})$ with $q\in\reals_{\ge 1}$ of arbitrary dimension $d\in\nats_{\ge 1}$ has
    approximation ratio $\alpha(q)\le\UB(q)$, where $\UB(q)$ is an increasing function in $q$ with $\UB(1)=1$ and $\UB(q)\to 3$ when $q\to\infty$.
    For $q=2$, $\UB(2)=\sqrt{6\sqrt{3}-8}\approx 1.55$.
\end{theorem}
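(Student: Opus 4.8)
The plan is to exhibit, for every $q\in(1,\infty)$, a constant $\lambda=\lambda(q)\in(0,1)$ for which the optimum of~\eqref{eq:main} is nonnegative \emph{for every} admissible facility $\facility$ (all $f_j>0$, $\qnorm{\facility}=1$); by the reduction set up just before~\eqref{eq:main}, this yields $\alpha(q)\le 1/\lambda(q)=:\UB(q)$. The endpoint $q=1$ needs no work, since the median is optimal in $\lqnorm[1](\reals^d)$ and hence $\UB(1)=\alpha(1)=1$.

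\emph{Step 1 (collapsing~\eqref{eq:main} to a one-variable inequality).} By Claim~\ref{cl:compact set} the optimum of~\eqref{eq:main} is attained at some $\locs^{*}$. I would fix signatures of the points $\loci^{*}$ consistently with the constraint; since moving a single point within the closure of its signature cell affects neither the other points nor the balance constraint, each $\loci^{*}$ minimizes $g$ over that closed cell, so Claim~\ref{cl:coordinate-wise condition} and Lemmas~\ref{lm:local optimal condition}--\ref{lm:expression for local optimum} apply. Writing $B:=\big(1-\lambda^{q/(q-1)}\big)^{(q-1)/q}$ and $h(x):=B(1-x)^{1/q}-\lambda x^{1/q}$, and using $\energyc[S_i]=1-\energy[S_i]$, they give $g(\loci^{*})\ge h\big(\energy[S_i]\big)$ for every $i$ (with equality when $S_i\neq\emptyset$, and $g(\loci^{*})\ge B=h(0)$ when $S_i=\emptyset$). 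The balance constraint makes each coordinate positive in exactly $n/2$ of the points, so $\sum_i\energy[S_i]=\tfrac{n}{2}\sum_j f_j^{q}=\tfrac{n}{2}$. Setting $a_i:=\energy[S_i]\in[0,1]$ and letting $\check h$ denote the lower convex envelope of $h$ on $[0,1]$, convexity gives, at the optimum $\locs^{*}$,
\[
\sum_i g(\loci^{*})\ \ge\ \sum_i h(a_i)\ \ge\ \sum_i\check h(a_i)\ \ge\ n\,\check h\!\Big(\tfrac{1}{n}\textstyle\sum_i a_i\Big)=n\,\check h\!\big(\tfrac{1}{2}\big),
\]
so the optimum of~\eqref{eq:main} is at least $n\,\check h(\tfrac12)$, a bound depending on neither $\facility$ nor $d$. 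Hence it suffices to take $\lambda(q)$ as the largest $\lambda\in(0,1)$ with $\check h(\tfrac12)\ge 0$ and set $\UB(q):=1/\lambda(q)$.

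\emph{Step 2 (identifying $\lambda(q)$).} I would check that $h(0)=B>0>-\lambda=h(1)$, that $h$ is strictly decreasing, and that $h''(x)$ has the sign of $\lambda x^{1/q-2}-B(1-x)^{1/q-2}$, which is positive on an interval $[0,x^{\ast})$ and negative on $(x^{\ast},1]$. Therefore $\check h$ equals $h$ on $[0,x_0]$ and equals the chord through $(x_0,h(x_0))$ and $(1,-\lambda)$ on $[x_0,1]$, where $x_0<x^{\ast}$ is the tangency point. Since $h$, and hence $\check h$, is pointwise strictly decreasing in $\lambda$, the number $\check h(\tfrac12)$ is strictly decreasing in $\lambda$ (positive near $\lambda=0$, negative near $\lambda=1$), so $\lambda(q)$ is its unique root. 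At this root one checks $x_0<\tfrac12$, so $\check h(\tfrac12)$ lies on the linear piece; then subtracting the tangency identity $h(x_0)+h'(x_0)(1-x_0)=-\lambda$ from the criticality identity $h(x_0)+h'(x_0)(\tfrac12-x_0)=0$ gives the clean relation $h'(x_0)=-2\lambda$, whence $h(x_0)=\lambda(1-2x_0)$. Eliminating $B$ and $\lambda$ from these two relations together with $B^{q/(q-1)}+\lambda^{q/(q-1)}=1$ reduces the problem to a single transcendental equation $x_0^{1/q-1}=2q-1-2(q-1)x_0$ for $x_0$ (discarding the spurious solution $x_0=1$), after which $\lambda(q)^{q/(q-1)}=\big(1+(1-x_0)\,(1+2(q-1)x_0)^{q/(q-1)}\big)^{-1}$.

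\emph{Step 3 (properties of $\UB$, and the hard part).} From this description: as $q\to 1^{+}$ one has $B\to 1$ and $h$ degenerates to the affine map $1-(1+\lambda)x$ with $\check h=h$, so $\check h(\tfrac12)=\tfrac{1-\lambda}{2}$ and $\UB(q)\to 1$; as $q\to\infty$ the transcendental equation forces $x_0\sim\tfrac{1}{2q}\to 0$ and $B\to 1-\lambda$, and the relations collapse to $B=2\lambda=1-\lambda$, so $\lambda(q)\to\tfrac13$ and $\UB(q)\to 3$; and for $q=2$ the equation becomes $2u^{3}-3u+1=0$ in $u=\sqrt{x_0}$, with relevant root $u=\tfrac{\sqrt3-1}{2}$, and back-substitution gives $\lambda(2)^{-2}=6\sqrt3-8$, i.e.\ $\UB(2)=\sqrt{6\sqrt3-8}$. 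The genuinely delicate parts — and where I expect the main obstacle — are (a) showing that at the critical $\lambda$ one indeed has $x_0<\tfrac12$ for \emph{every} $q\in(1,\infty)$, so that the binding constraint is the tangent chord reaching $0$ rather than $h(\tfrac12)\ge 0$; and (b) proving that $\UB$ is increasing in $q$: this cannot be obtained from a soft pointwise comparison, since $h(x)$ is not monotone in $q$ for every $x$, and instead requires implicitly differentiating the system for $(x_0,B,\lambda)$ in $q$ and verifying $\tfrac{d\lambda}{dq}<0$.
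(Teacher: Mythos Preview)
Your approach is correct and coincides with the paper's: both relax~\eqref{eq:main} to the single linear constraint $\sum_i\energy[S_i]=n/2$, lower-bound each $g(\loci)$ by the same function $h$ via Lemma~\ref{lm:expression for local optimum}, and arrive at exactly the two-equation system~\eqref{eq:upper bound} (your relations $h'(x_0)=-2\lambda$ and $h(x_0)=\lambda(1-2x_0)$ are precisely the paper's $u'(a^*)=0$ and $u(a^*)=0$, and your transcendental equation for $x_0$ is~\eqref{eq:optimal a^*}). The one presentational difference is that where the paper characterizes the optimizer of~\eqref{eq:relaxed main} explicitly (Claim~\ref{cl:property of h(x)} and Lemma~\ref{optimal value of x_i}: all $x_i\in\{a,b,1\}$, then sends $n\to\infty$ to discard the lone $b$), you pass to the lower convex envelope $\check h$ and apply Jensen; this is a slightly cleaner packaging that yields the bound for every $n$ without the duplication argument. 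Your concern~(a) is easier than you suggest: the tangency point $x_0$ necessarily lies in the convex piece $(0,z]$ of $h$, and $z\le\tfrac12$ whenever $\delta\ge 1$, which must hold at the critical $\lambda$ (else already $h(\tfrac12)<0$); this is exactly what the paper's Lemma~\ref{lm:optimal solution for a} establishes. Your concern~(b) is well taken: the paper does not prove analytically that $\UB(q)$ is increasing in $q$ either---it asserts it, displays the plot, and verifies the values at $q=2$ and $q\to\infty$.
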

\begin{proof} We now consider our main optimization problem~\eqref{eq:main}.
Lemma~\ref{lm:expression for local optimum} shows that, for any location $\loci\in\locs$, $g(\loci)$ can be expressed\footnote{In case $\energy[S(\loci)]=0$, we only have a lower bound on $g(\loci)$, which is without any loss of generality when dimension $d\to\infty$.} as a function of $\energy[S(\loci)]$ and $\energy[\overline{S}(\loci)]=1-\energy[S(\loci)]$. 
The constraints $\sum_{i\in [n]}\sigma(\loci) = \vect{0}$ of~\eqref{eq:main} is still unwieldy. On the other hand, it implies that
\begin{equation}
    \label{eq:relaxed_median_constraint}
    \sum\limits_{i\in[n]}\energy[S(\loci)]=
    \sum_{i\in[n]}\sum_{j:p_{i,j}\ge 0^+}f_j^q
    =\sum_{j\in[d]}f_j^q\cdot\sum_{i:p_{i,j}\ge 0^+}1
    =\frac{n}{2}\cdot\sum_{j\in[d]}f_j^q=\frac{n}{2}.
\end{equation}
We simply relax the constraint in \eqref{eq:main} to \eqref{eq:relaxed_median_constraint} and solve the corresponding minimization problem (recall that we are interested in finding the largest possible $\lambda<1$ such that \eqref{eq:main} is at least $0$). To further simplify the presentation of the relaxed optimization problem, we let $x_i\eqdef\energy[S(\loci)]\in[0,1]$, $1-x_i=\energy[\overline{S}(\loci)]$, and define $\delta=\delta(\lambda)\eqdef\left(1-\lambda^{q/(q-1)}\right)^{(q-1)/q}/\lambda=\left(\lambda^{-q/(q-1)}-1\right)^{(q-1)/q}$, which is a decreasing function of $\lambda$.
Then the objective function of \eqref{eq:main} can be rewritten as $\sum_{i\in[n]}h(x_i)$, where $h(x_i)\eqdef \lambda\cdot\left(\delta\cdot(1-x_i)^{1/q}-x_i^{1/q}\right)$. The relaxed problem in $\vx=(x_i)_{i\in[n]}$ is as follows
\begin{align}
 \label{eq:relaxed main}
 \min\limits_{\vx}& \sum_{i\in[n]} h(x_i), &&\quad
 where~ h(x_i)\eqdef \lambda\cdot\left(\delta\cdot(1-x_i)^{1/q}- x_i^{1/q}\right)\nonumber\\
 \text{s.t.}&  \sum_{i\in [n]}x_i = \frac{n}{2},
 &&\quad\quad \forall i\in[n]~x_i\in[0,1]
\end{align}
Note that we are only interested in $\lambda$ such that $\delta(\lambda)\ge 1$ (otherwise, \eqref{eq:relaxed main} is negative for $(x_i=1/2)_{i\in[n]}$). Next, to solve \eqref{eq:relaxed main}, we explore convexity/concavity of $h(\cdot)$ on the interval $x\in[0,1]$.
\begin{claim}
    \label{cl:property of h(x)}
    The function $h(x)$ is convex in $x$ on $[0,z]$ and concave in $x$ on $[z,1]$, where $z\eqdef\delta^{-q/(2q-1)}/\left(\delta^{-q/(2q-1)}+1\right)\le 1/2$.
\end{claim}
\begin{proof}
We take second-order derivative of $h(x)$:
\begin{align*}
    \frac{\partial^2}{\partial x^2}h(x)=\lambda\cdot\frac{q-1}{q^2}\cdot\left(x^{-(2q-1)/q}-\delta\cdot(1-x)^{-(2q-1)/q}\right).
\end{align*}
The function $\frac{\partial^2}{\partial x^2}h(x)$ goes from $+\infty$ to $-\infty$ when $x$ goes from $0$ to $1$. The only real root of $\frac{\partial^2}{\partial x^2}h(x)$ on $x\in[0,1]$ is 
\[
z=\delta^{-q/(2q-1)}/\left(\delta^{-q/(2q-1)}+1\right).
\] 
As $\delta\ge 1$, we have $z\le\frac{1}{2}$.
Thus, $\frac{\partial^2}{\partial x^2}h(x)\ge 0$
for $x\in[0,z]$ implying that $h(x)$ is convex on $[0,z]$, and $\frac{\partial^2}{\partial x^2}h(x)\le 0$
for $x\in[z,1]$ implying that $h(x)$ is concave on $[z,1]$.
\end{proof}
Given Claim~\ref{cl:property of h(x)}, we can succinctly describe local minima of~\eqref{eq:relaxed main}.
\begin{lemma}[Possible Solutions to~\eqref{eq:relaxed main}]
    \label{optimal value of x_i}
    $\exists a\in[0,z],b\in[z,1)$, such that
    $\forall i\in[n]~x_i\in\{a, b, 1\}$ in the optimal solution to \eqref{eq:relaxed main}. Moreover, $\vert\{i\in[n]~\vert~ x_i=b\}\vert\le 1$. 
\end{lemma}
\begin{proof}
First, we show that $\exists a\in[0,z]$ such that each $x_i=a$ if $x_i\in[0,z]$. Assume towards a contradiction that $0\le x_1 < x_2\le z$ in \eqref{eq:relaxed main}. As $h(\cdot)$ is convex on $[0,z]$, we could replace $x_1$ and $x_2$ with $x_1+\eps$ and $x_2-\eps$ without violating the constraint $\sum_{i\in[n]}x_i=n/2$ and reduce $h(x_1)+h(x_2)>h(x_1+\eps)+h(x_2-\eps)$.

Next, we show that there could be at most one $x_i\in(z,1)$ in the optimum solution to \eqref{eq:relaxed main}. We assume that $z<x_1\leq x_2<1$ towards a contradiction. Then we could decrease $h(x_1)+h(x_2)>h(x_1-\eps)+h(x_2+\eps)$ by moving $x_1$ and $x_2$ to $x_1-\eps$ and $x_2+\eps$ without violating the constraint $\sum_{i\in[n]}x_i=n/2$.
\end{proof}
Lemma~\ref{optimal value of x_i} allow us to reduce the space of optimization~\eqref{eq:relaxed main} to only three parameters $a,b,$ and the number of points $|\{i: x_i=a\}|$. Moreover, since we may duplicate the locations of optimal solution $\locs$ without changing the objectives and violating the constraints in~\eqref{eq:main} and~\eqref{eq:relaxed main}, we can assume that the minimum in~\eqref{eq:main} and~\eqref{eq:relaxed main} is attained when the number of points $n$ goes to infinity. Now, when $n\to+\infty$ we can ignore the effect of a single point\footnote{Formally, for any $\lambda\in(0,1)$ such that~\eqref{eq:relaxed main} is strictly smaller than $0$ for $n=n_0$ points, we let $n=100\cdot n_0$ and construct a feasible solution $\vx$ to \eqref{eq:relaxed main}, such that all $x_i\in\{a,1\}$ and $\sum_{i\in[n]}h(x_i)<0$.} $x_i=b$ in~\eqref{eq:relaxed main}, which reduces the number of parameter down to $2$ ($a$, $|\{i: x_i=a\}|$). As we have the constraint $\sum_{i=1}^n x_i = n/2$, the number $|\{i: x_i=a\}|$ and $|\{i: x_i=1\}|$ must be respectively $\frac{n}{2(1-a)}$ and $\frac{1-2a}{2(1-a)}\cdot n$. 
I.e., there is only one parameter $a$ to optimize in~\eqref{eq:relaxed main}. Recall that we need to find $\lambda$ (which defines $\delta(\lambda)$) such that   $\sum_{i\in[n]}h(x_i)\ge 0$ for any feasible $\vx$
in~\eqref{eq:relaxed main}.
Equivalently, we need to find $\lambda$ such that
\begin{align}
    \label{eq: optimization over a}   &\min_{a\in[0,z]}\left[\frac{n}{2(1-a)}\cdot h(a) + \frac{1-2a}{2(1-a)}\cdot n\cdot h(1)\right]&\\
    &=\min_{a\in[0,z]}\left[\frac{n}{2(1-a)}\cdot\lambda\cdot\left(\delta\cdot(1-a)^{1/q}-a^{1/q}-1+2\cdot a\right)\right]\ge 0,\nonumber
\end{align}
Thus, we need to minimize $u(a)\eqdef\delta\cdot(1-a)^{1/q}-a^{1/q}-1+2\cdot a$.
Let $a^*\eqdef \argmin\limits_{a\in[0,z]} u(a)$. We observe that 
\begin{lemma}
    \label{lm:optimal solution for a}
    If $u(a^*)=0$, then $a^*$ is optimal $\Rightarrow u'(a^*)=0$.
\end{lemma}
The proof of Lemma~\ref{lm:optimal solution for a} is straightforward (we simply verify that if $u(a^*)=0$, $a\in\{0,z\}$ are not optimal, and that $u'(a)=0$ has a unique solution on $(0,z)$) and we defer it to Appendix. 

The final step in the proof of Theorem~\ref{thm:UB} is to find $\lambda^*$ such that the minimum of \eqref{eq:relaxed main} and respectively \eqref{eq: optimization over a} is equal to $0$.
As \eqref{eq:relaxed main} is a relaxation of \eqref{eq:main} with a smaller or equal value, we immediately would get that \eqref{eq:main} has value at least $0$ for $\lambda=\lambda^*$, which translates into $1/\lambda^*$ approximation guarantee for the coordinate-wise median mechanism. By Lemma~\ref{lm:optimal solution for a}, the optimal $\lambda^*$ and $a^*$ must satisfy equations $u(a^*)=0$ and $u'(a^*)=0$. Then by denoting $\delta^*=\delta(\lambda^*)$, we get the following system of equations on $\lambda^*$ and $a^*$
\begin{equation}
\label{eq:upper bound}
\left\{
    \begin{aligned}
    &\delta^*\cdot(1-a^*)^{\frac{1}{q}}-{(a^*)}^{\frac{1}{q}}-1+2\cdot a^*=0\\
    &\frac{1}{q}\cdot\left(-\delta^*\cdot(1-a^*)^{\frac{1-q}{q}}-(a^*)^{\frac{1-q}{q}}\right)+2=0.
    \end{aligned}
    \right.
\end{equation}
After multiplying by $1-a^*$ the second equation in \eqref{eq:upper bound}, we get
\begin{equation}
\label{eq:inequality of a}
    \frac{1}{q}\cdot\left(-\delta^*\cdot(1-a^*)^{\frac{1}{q}}+(a^*)^{\frac{1}{q}}-(a^*)^{\frac{1-q}{q}}\right)+2-2a^*=0.
\end{equation}
The first equation in \eqref{eq:upper bound} gives us $\delta^*\cdot(1-a^*)^{1/q}-{(a^*)}^{1/q}=1-2\cdot a^*$. By plugging it in \eqref{eq:inequality of a} we get
\begin{align}
    \label{eq:optimal a^*}
    2\cdot\left(1-\frac{1}{q}\right)\cdot a^*+\frac{1}{q}\cdot (a^*)^{\frac{1-q}{q}}-2+\frac{1}{q}=0.
\end{align}
We now obtain $\lambda^*(q)$ through the following steps:
\begin{enumerate}[label=(\roman*)]
        \item find $a^*$ from equation~\eqref{eq:optimal a^*}.
        \item set $\delta^*=\left((a^*)^{1/q}+1-2\cdot a^*\right)/(1-a^*)^{1/q}$ from the first line of~\eqref{eq:upper bound}.
        \item calculate $\lambda^*$ by inverting function $\delta(\lambda)=\left(\lambda^{-\frac{q}{q-1}}-1\right)^{\frac{q-1}{q}}$ at $\delta=\delta^*$:
        \[
        \lambda^*=\left(1+(\delta^*)^{\frac{q}{q-1}}\right)^{-\frac{q-1}{q}}.
        \]
\end{enumerate}
We can numerically calculate $a^*(q)$ for each $q\ge 1$ and then get $\UB(q)=1/\lambda^*(q)$ (see figure~\ref{fig:UB(q)}).
\begin{figure}[htb] 
		\centering
		\includegraphics[width=5.0in]{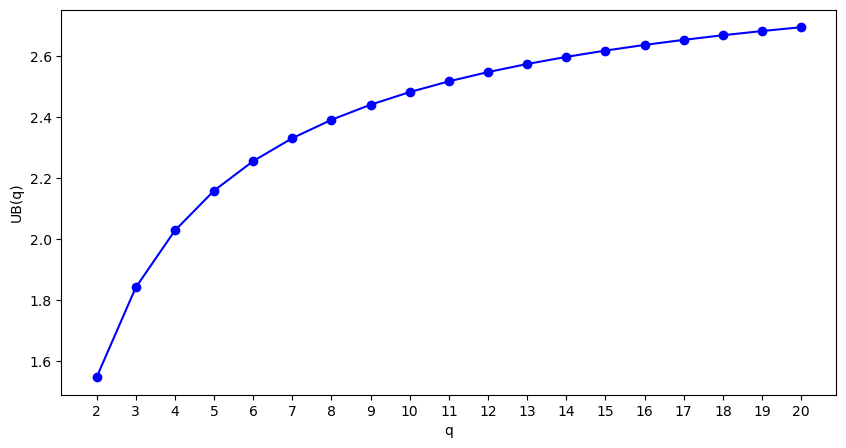}
		\caption{UB(q)}
\label{fig:UB(q)}		
\end{figure}

Furthermore, for $q=2$ and $q\to+\infty$ we can find algebraic expressions for $a^*$ and thus get precise values of $\UB(q)$. In particular, for $q=2$ the equation \eqref{eq:optimal a^*} becomes $a^*+\frac{1}{2}\cdot (a^*)^ {-1/2}-\frac{3}{2}=0$, which gives $a^*=1-\frac{\sqrt{3}}{2}$. By following steps (ii) and (iii) we get $\UB(2)=1/\lambda^*=\sqrt{6\sqrt{3}-8}$.
When $q\rightarrow\infty$, in step (i) one can get $a=\frac{1+o(1)}{2q}$. Then by (ii) we get $\delta^*\rightarrow (a^*)^{1/q}+1\rightarrow 2$ and by (iii) $\lambda^*\rightarrow 1/(1+\delta^*)\rightarrow 1/3$. I.e., we get that median mechanism is a $3$-approximation under $\lqnorm[\infty]$ norm.
\end{proof}

\subsection{Lower Bounds on Approximation Ratios}
\label{sec:lower_global}
In this section, we show that the upper bound $\UB(q)$ is essentially tight by constructing a series of instances for each norm $\lqnorm$ with $q\ge 1$ and each dimension $d\in\nats_{\ge 1}$ such that coordinate-wise median achieves $\LB(q,d)=\UB(q)\cdot\left(1-O(\frac{1}{d})\right)$ approximation. For a given $q$ and $d$ our instance is as follows.

The optimal facility location is at $\facility=(1,1,\cdots,1)\in\reals^d$ and the median $\m=(0,0,\cdots,0)\in\reals^d$. We shall use parameters $a^*,\lambda^*$ given by \eqref{eq:optimal a^*}. We also define parameter $c^*$ (analog of $c$ from Section~\ref{sec:upper_local}) as
\begin{align*}
        c^*\eqdef\frac{\left(\frac{1-a^*}{a^*}\cdot\frac{(\lambda^*)^{q/(q-1)}}{1-(\lambda^*)^{q/(q-1)}}\right)^{1/q}}{1+\left(\frac{1-a^*}{a^*}\cdot\frac{(\lambda^*)^{q/(q-1)}}{1-(\lambda^*)^{q/(q-1)}}\right)^{1/q}}.
\end{align*}        
The number of points $n$ is a very large integer. There are only two types of points $\loc\in\reals^d$ in $\locs$.
\begin{description}
        \item[Type I.] $\flr{a^*\cdot d}$ coordinates are $1/(1-c^*)$ and the rest are $0$.        
        \item[Type II.] $\loc=(1,1,\cdots,1)$ the same as $\facility$.
\end{description}
The numbers\footnote{Obviously, both of these numbers must be integers, but as $n$ can be arbitrary large, we will simply ignore the dependency on $n$ and the fact that $\frac{n}{2-2a^*}$ may be non integer.} of Type I and II points are respectively $\frac{n}{2-2a^*}$ and $n\cdot\frac{1-2a^*}{2-2a^*}$. 
The strictly positive coordinates of Type I points are uniformly distributed across all $d$ coordinates.
Then the social costs of the median mechanism $\SC(\locs,\median)$ and the optimum $\SC(\locs,\facility)$ are respectively
\begin{align*}
    &\sum_{\loc\in\locs} \qnorm{\m-\loc}=\frac{1}{1-c^*}\cdot\flr{a^*\cdot d}^{1/q}\cdot\frac{n}{2-2a^*}+d^{1/q}\cdot\frac{1-2a^*}{2-2a^*}\cdot n,\\
    &\sum_{\loc\in\locs} \qnorm{\facility-\loc}=\left(\left(\frac{c^*}{1-c^*}\right)^q\cdot \flr{a^*\cdot d}+(d-\flr{a^*\cdot d})\right)^\frac{1}{q}\cdot\frac{n}{1-2a^*}.
\end{align*}
We note that when $d\rightarrow\infty$, the approximation ratio for our instance is approaching $\UB(q)=1/\lambda^*$, as the difference between $\flr{a^*\cdot d}$ and $a^*\cdot d$ is negligible: 
\begin{align*}
    \frac{\sum_{\loc\in\locs} \qnorm{\m-\loc}}{\sum_{\loc\in\locs} \qnorm{\facility-\loc}}
    \underset{d\to+\infty}{\longrightarrow}
    \frac{\frac{1}{1-c^*}\cdot (a^*)^{1/q}+1-2a^*}{\left(\left(\frac{c^*}{1-c^*}\right)^q\cdot a^* +1-a^*\right)^{1/q}}=1/\lambda^*.
\end{align*}
For a given $d$ we get the following lower bound on the approximation ratio
\begin{align*}
    &\frac{\sum_{\loc\in\locs} \qnorm{\m-\loc}}{\sum_{\loc\in\locs} \qnorm{\facility-\loc}}=\frac{\frac{1}{1-c^*}\cdot\flr{a^*\cdot d}^{1/q}+d^{1/q}\cdot(1-2a^*))}{\left(\left(\frac{c^*}{1-c^*}\right)^q\cdot \flr{a^*\cdot d}+(d-\flr{a^*\cdot d})\right)^{1/q}}\\
    &\leq \frac{\frac{1}{1-c^*}\cdot(a^*\cdot d)^{1/q}+d^{1/q}\cdot(1-2a^*))}{\left(\left(\frac{c^*}{1-c^*}\right)^q\cdot (a^*\cdot d-1)+(d-a^*\cdot d)\right)^{1/q}}\\
    &\leq \frac{\frac{1}{1-c^*}\cdot(a^*\cdot d)^{1/q}+d^{1/q}\cdot(1-2a^*))}{\left(\left(\frac{c^*}{1-c^*}\right)^q\cdot (a^*\cdot d)+(d-a^*\cdot d)\right)^{1/q}}\cdot\frac{1}{(1-O(\frac{1}{d}))^{1/q}}\\
    &\leq \frac{1}{\lambda^*\cdot(1-O(\frac{1}{d}))}.
\end{align*}
For the important special case of $\lqnorm[\infty](\reals^d)$, our instance consists of the following two types of points.
\begin{description}
        \item[Type I.] One coordinate is $2$ and all remaining coordinates are $0$.        
        \item[Type II.] $\loc=(1,1,\cdots,1)$ the same as $\facility$.
\end{description}
There are $\frac{d}{2d-1}\cdot n$ points of Type I and $\frac{d-1}{2d-1}\cdot n$ points of Type II. The non zero coordinate of Type I points are distributed uniformly over $[d]$.  The approximation ratio of the median mechanism is
\begin{align*}
    \frac{\sum_{\loc\in\locs} \qnorm[\infty]{\m-\loc}}{\sum_{\loc\in\locs} \qnorm[\infty]{\facility-\loc}}
    = \frac{2\cdot\frac{d}{2d-1}\cdot n+1\cdot\frac{d-1}{2d-1}\cdot n}{1\cdot\frac{d}{2d-1}\cdot n}=3-\frac{1}{d}\underset{d\to+\infty}{\longrightarrow}
    3.
\end{align*}

In summary, we obtain the following result
\begin{theorem}
    \label{thm:LB}
   For $\lqnorm(\reals^d)$ with any $q\ge 1$ and $d\in\nats_{\ge 1}$, the 
   approximation ratio of the median mechanism is at most $\LB(q,d)=(1-O(\frac{1}{d}))\cdot\UB(q)$.
   I.e., the upper bound $\UB(q)$ is tight for $d\rightarrow\infty$.
\end{theorem}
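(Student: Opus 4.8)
The plan is to turn the worst case of the relaxed program~\eqref{eq: optimization over a} into an honest point configuration, so that the matching lower bound follows from the analysis already done for Theorem~\ref{thm:UB}. The case $q=1$ is trivial (every approximation ratio is $\ge 1=\UB(1)$), so fix $q\in(1,\infty)$ and a dimension $d$, and let $a^*$, $\lambda^*$, $c^*$ be the parameters introduced above (from~\eqref{eq:optimal a^*} and the displays preceding this theorem); recall $a^*\in(0,z)$ is bounded away from $0$ and $1/2$ for fixed $q$. I would take $\facility=(1,\dots,1)$ and build $\locs$ from two kinds of points: \emph{Type I} points, which are --- up to the $\flr{a^*d}$-versus-$a^*d$ rounding --- the local optimum of $g$ from Lemma~\ref{lm:local optimal condition}(i) for a signature with $\flr{a^*d}$ positive coordinates (coordinate value $1/(1-c^*)$ on those $\flr{a^*d}$ coordinates and $0^-$ on the rest, with the positive coordinates spread evenly over $[d]$); and \emph{Type II} points, which are exact copies of $\facility$ (the $x_i=1$ case of Lemma~\ref{lm:local optimal condition}(i), for which $c=0$). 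I would use $\tfrac{n}{2-2a^*}$ Type I points and $\tfrac{(1-2a^*)n}{2-2a^*}$ Type II points --- exactly the multiplicities of the minimizer of~\eqref{eq: optimization over a} --- taking $n$ large and suitably divisible so that these counts and the even spread are literally realizable.

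The first thing to check is that the mechanism outputs $\median(\locs)=\zerovec$. Fix $j\in[d]$: every Type II point has positive $j$-th coordinate, and a Type I point has positive $j$-th coordinate exactly when $j$ lies in its support, which happens for a $\flr{a^*d}/d$ fraction of them; hence the number of points with strictly positive $j$-th coordinate is $\tfrac{n}{2-2a^*}\bigl(1-2a^*+\flr{a^*d}/d\bigr)\le\tfrac{n}{2-2a^*}(1-a^*)=\tfrac n2$, while every other point has $j$-th coordinate exactly $0$. Since all coordinate values occurring are nonnegative, the coordinate-wise median is $\zerovec$ (breaking ties toward the smaller value), so $\locs$ is a legitimate instance on which the mechanism is forced to put the facility at $\zerovec$.

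Next I would read off the two social costs directly (the Type II points sit on $\facility$ and contribute nothing to $\SC(\locs,\facility)$):
\[ \SC(\locs,\median)=\tfrac{\flr{a^*d}^{1/q}}{1-c^*}\cdot\tfrac{n}{2-2a^*}+d^{1/q}\cdot\tfrac{(1-2a^*)n}{2-2a^*},\qquad \SC(\locs,\facility)=\Bigl(\bigl(\tfrac{c^*}{1-c^*}\bigr)^q\flr{a^*d}+d-\flr{a^*d}\Bigr)^{1/q}\cdot\tfrac{n}{2-2a^*}. \]
Since the genuine optimum costs at most $\SC(\locs,\facility)$, the approximation ratio on $\locs$ is at least $\SC(\locs,\median)/\SC(\locs,\facility)$; cancelling the common factor, pulling out $d^{1/q}$, and using $\flr{a^*d}=(1-O(1/d))\,a^*d$ to bound the numerator from below and the denominator from above gives
\[ \frac{\SC(\locs,\median)}{\SC(\locs,\facility)}\ \ge\ \bigl(1-O(\tfrac1d)\bigr)\cdot\frac{\tfrac{(a^*)^{1/q}}{1-c^*}+1-2a^*}{\bigl(\bigl(\tfrac{c^*}{1-c^*}\bigr)^q a^*+1-a^*\bigr)^{1/q}}, \]
so it remains only to identify the limit fraction with $1/\lambda^*=\UB(q)$. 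The clean way to see this: the construction realizes the minimizing solution $x_i\in\{a^*,1\}$ of~\eqref{eq: optimization over a} at the threshold $\lambda=\lambda^*$, where that minimum is $0$. Indeed, after rescaling $\facility$ to unit norm, a Type I point is a local optimum with $\energy[S]=\flr{a^*d}/d\to a^*$ and a Type II point is the one with $\energy[S]=1$, so Lemma~\ref{lm:expression for local optimum} gives $g(\loci)=h(\energy[S(\loci)])+O(1/d)$; hence $\SC(\locs,\facility)-\lambda^*\SC(\locs,\median)=\sum_i g(\loci)=\tfrac{n}{2-2a^*}h(a^*)+\tfrac{(1-2a^*)n}{2-2a^*}h(1)+O(n/d)$, and by the first line of~\eqref{eq:upper bound} the leading term equals $\tfrac{n\lambda^*}{2-2a^*}u(a^*)=0$, so $\SC(\locs,\median)/\SC(\locs,\facility)=\tfrac1{\lambda^*}(1-O(\tfrac1d))=\LB(q,d)$. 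Equivalently and more concretely, one substitutes the closed form of $c^*$, the identity $\delta^*=((\lambda^*)^{-q/(q-1)}-1)^{(q-1)/q}$, and~\eqref{eq:upper bound} into the limit fraction and simplifies --- the same computation that proves Lemma~\ref{lm:expression for local optimum}, run backwards.

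Finally, $q=\infty$ I would handle by a separate, simpler instance (or by passing to the limit $q\to\infty$ of the family above): points with one coordinate equal to $2$ and all others $0$, together with copies of $(1,\dots,1)$, in proportions chosen so the coordinate-wise median again lands at $\zerovec$, for which a one-line computation gives $\SC(\locs,\median)/\SC(\locs,\facility)=3-O(\tfrac1d)\to 3=\UB(\infty)$. The step I expect to be the real work is the identification in the previous paragraph: one must verify that the ad hoc constant $c^*$ genuinely makes each Type I point a local optimum of $g$ up to an $O(1/d)$ perturbation (so Lemma~\ref{lm:expression for local optimum} applies and $g(\loci)=h(\energy[S(\loci)])+O(1/d)$ holds uniformly in $d$), and then carry the $\flr{a^*d}$ rounding through the $q$-norms to confirm the relative loss is a uniform $O(1/d)$; the median check and the asymptotics are routine, and the final ``$=1/\lambda^*$'' is then forced by the defining equations~\eqref{eq:optimal a^*} and~\eqref{eq:upper bound}.
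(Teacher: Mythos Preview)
Your proposal is correct and follows essentially the same construction as the paper: the same facility $\facility=(1,\dots,1)$, the same two point types with the same multiplicities $\tfrac{n}{2-2a^*}$ and $\tfrac{(1-2a^*)n}{2-2a^*}$, the same $\lqnorm[\infty]$ instance, and the same $\flr{a^*d}$-rounding analysis yielding the $1-O(1/d)$ factor. You add two things the paper leaves implicit --- an explicit verification that $\median(\locs)=\zerovec$ (the paper only says the positive coordinates are ``uniformly distributed''), and a conceptual argument via Lemma~\ref{lm:expression for local optimum} and~\eqref{eq:upper bound} that the limit ratio equals $1/\lambda^*$, whereas the paper simply asserts this identity and computes the finite-$d$ inequality chain directly --- but the route is the same.
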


\section{Generalized Median of~\cite{AgrawalBGTX22} in $\mathbb{L}_2(\mathbb{R}^d)$}
\label{sec:prediction}

We discuss in this section how to extend our analysis of the median mechanism to the generalized median mechanism of~\cite{AgrawalBGTX22} in strategic facility location problem with prediction. In the mechanism design setting with prediction, the mechanism $\mech$ receives  reports $\locs=(\loc_1,\loc_2,\cdots,\loc_n)$ of $n$ strategic agents along with the learning-augmented advice for the placement of the optimal facility $\pred\in\reals^d$. When prediction $\pred$ is accurate, the mechanism $\mech(\locs,\pred)$ is expected to perform better than the worst-case approximation guarantee. The approximation ratio of $\mech$ in the case $\pred=\facility(\locs)$ is called \emph{consistency guarantee}. On the other hand, the prediction $\pred$ might also be highly inaccurate. In this case, the mechanism is expected to retain some of the worst-case performance even when $\pred$ has arbitrary bad social cost compared to the optimum $\facility(\locs)$ for the actual locations of the agents. The worst-case approximation ratio of $\mech(\locs,\pred)$ over all possible predictions $\pred\in\reals^d$
is called \emph{robustness guarantee} of $\mech$.  
Agrawal, Balkanski, Gkatzelis, Ou, and Tan~\cite{AgrawalBGTX22} have shown that the following generalized median mechanism called $\CMP(c)$ parametrized with $c\in[0,1]$ achieves best possible trade-offs for the consistency and robustness in $\lqnorm[2](\reals^2)$: add $c\cdot n$ copies of the predicted point $\pred$ to the input $\locs$ and calculate coordinate-wise median 
\[
\cmp=\median(\loc_1,\loc_2,\cdots,\loc_n, c\cdot n\text{ copies of}\ \pred).
\]
We would like to understand how the consistency and robustness approximation guarantees of $\CMP(c)$ degrade in 
arbitrary Euclidean spaces $\lqnorm[2](\reals^d)$. 

\begin{theorem}
    \label{thm:consistency_guarantee}
    The consistency guarantee of $\CMP(c)$ mechanism in $\lqnorm[2](\R^d)$ is at most
    \[
    \begin{cases}
    \left( 4\sqrt{2c+3}\cdot c + 6\sqrt{2c+3} - 10c - 8 \right)^{1/2} / (c+1), & \quad\text{when }c \in [0, 1/2)\\
    \sqrt{\frac{2}{c+1}} & \quad\text{when }c \in [1/2, 1).
    \end{cases}
    \]
\end{theorem}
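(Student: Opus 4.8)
The plan is to rerun the optimization argument behind Theorem~\ref{thm:UB}, specialized to $q=2$; the only new ingredient is that the median constraint must now account for the $c\cdot n$ injected copies of the (correct) prediction $\pred=\facility$. As in Section~\ref{sec:approx} I normalize so that $\cmp=\vect0$, the optimum $\facility=(f_1,\dots,f_d)$ has all strictly positive coordinates, $\twonorm{\facility}=1$, and $n$ is arbitrarily large (by duplicating points). Each of the $cn$ copies sits at $\facility$, hence on the nonnegative side of every coordinate; so for $\vect0$ to be a valid coordinate-wise median of the $n(1+c)$ points, in each coordinate $j$ at most $\tfrac{1-c}{2}n$ of the $n$ real agents may have $p_{i,j}\ge 0^+$. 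Summing over $j$ with weights $f_j^2$ and using $\sum_j f_j^2=1$ collapses this to the single relaxed constraint $\sum_{i\in[n]}\energy[S(\loci)]\le\tfrac{1-c}{2}n$, which plays the role of \eqref{eq:relaxed_median_constraint}. Since the social cost is still taken over the $n$ real agents only, the objective $\SC(\locs,\facility)-\lambda\cdot\SC(\locs,\cmp)=\sum_i g(\loci)$ and the per-point optimization are untouched: Lemmas~\ref{lm:local optimal condition} and~\ref{lm:expression for local optimum} apply verbatim, and writing $x_i\eqdef\energy[S(\loci)]\in[0,1]$, $\delta\eqdef\sqrt{1-\lambda^2}/\lambda$, $h(x)\eqdef\lambda\bigl(\delta\sqrt{1-x}-\sqrt{x}\bigr)$, the task reduces to finding the largest $\lambda$ for which
\[
\min_{\vx}\ \sum_{i\in[n]}h(x_i)\quad\text{s.t.}\quad\sum_{i\in[n]}x_i\le\tfrac{1-c}{2}n,\ \ x_i\in[0,1]
\]
is nonnegative; then the consistency guarantee is at most $1/\lambda=\sqrt{1+\delta^2}$.

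Claim~\ref{cl:property of h(x)} ($h$ convex on $[0,z]$, concave on $[z,1]$) and Lemma~\ref{optimal value of x_i} do not depend on the constraint, so in the $n\to\infty$ limit an optimal configuration uses only values $x_i\in\{a,1\}$ with $a\in[0,z]$; let $t\in[0,1]$ be the fraction at $1$. Since $h$ is decreasing, the budget constraint is tight at the optimum, $t+(1-t)a=\tfrac{1-c}{2}$, forcing $a\le\tfrac{1-c}{2}$ and $1-t=\tfrac{(1+c)/2}{1-a}$. Substituting back, nonnegativity of the objective is equivalent to $u(a)\ge 0$ for all $a\in[0,\min\{z,\tfrac{1-c}{2}\}]$, where
\[
u(a)\eqdef\tfrac{1+c}{2}\,\delta\sqrt{1-a}-\tfrac{1+c}{2}\sqrt{a}+a-\tfrac{1-c}{2}.
\]
A short computation shows $u'$ is negative near both ends of $(0,1)$ and positive on at most one interior interval, so—exactly as in Lemma~\ref{lm:optimal solution for a}—the minimum of $u$ over the feasible interval occurs either at its interior stationary point or at the right endpoint $a=\tfrac{1-c}{2}$ (which is the case $t=0$), and we choose $\lambda$ so that this minimum equals $0$.

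\emph{Interior branch.} Imposing $u(a^*)=0$ together with $u'(a^*)=0$ and eliminating $\delta$ (the same manipulation that takes \eqref{eq:upper bound} to \eqref{eq:optimal a^*}) collapses the system to $\tfrac{1+c}{2}\,(a^*)^{-1/2}=\tfrac{3+c}{2}-a^*$, i.e. to the cubic $2s^3-(3+c)s+(1+c)=0$ in $s\eqdef\sqrt{a^*}$; this factors as $(s-1)\bigl(2s^2+2s-(1+c)\bigr)=0$, so $\sqrt{a^*}=\tfrac{\sqrt{2c+3}-1}{2}$. Reading off $\delta^*=\bigl(\tfrac{1+c}{2}\sqrt{a^*}+\tfrac{1-c}{2}-a^*\bigr)\big/\bigl(\tfrac{1+c}{2}\sqrt{1-a^*}\bigr)$ from $u(a^*)=0$ and simplifying $\sqrt{1+(\delta^*)^2}$ yields the claimed $\bigl(4c\sqrt{2c+3}+6\sqrt{2c+3}-10c-8\bigr)^{1/2}/(c+1)$; for $c=0$ this recovers $\sqrt{6\sqrt3-8}=\UB(2)$, a reassuring check. \emph{Endpoint branch.} Here $t=0$ and the objective per point is $h\bigl(\tfrac{1-c}{2}\bigr)$, which is nonnegative iff $\delta\ge\sqrt{\tfrac{1-c}{1+c}}$; hence $\delta^*=\sqrt{\tfrac{1-c}{1+c}}$ and $\sqrt{1+(\delta^*)^2}=\sqrt{2/(c+1)}$. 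Finally, $a^*-\tfrac{1-c}{2}=\tfrac12\bigl(2c+1-\sqrt{2c+3}\bigr)$ has the same sign as $(2c-1)(c+1)$, so the interior stationary point lies inside $[0,\tfrac{1-c}{2}]$ exactly when $c\le\tfrac12$ and leaves it for $c>\tfrac12$; the two formulas coincide at $c=\tfrac12$ (both equal $2/\sqrt3$), which gives the stated piecewise bound.

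I expect the main obstacle to be twofold. First, correctly deriving the modified median constraint and justifying its relaxation to $\sum_i x_i\le\tfrac{1-c}{2}n$: one must observe that only the ``$\le$'' side of the median condition is binding for the worst case and that the budget is tight at the optimum. Second, pinning down the endpoint/interior dichotomy—i.e., showing that the interior stationary point of $u$ exits the feasible interval precisely at $c=\tfrac12$—which is what produces the two-piece answer and requires a little care at $a=\tfrac{1-c}{2}$. Everything else (reusing Lemmas~\ref{lm:local optimal condition}--\ref{optimal value of x_i} and the cubic/radical algebra) is routine.
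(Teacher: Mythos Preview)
Your proposal is correct and follows essentially the same route as the paper: normalize so that $\cmp=\vect{0}$, derive the modified scalar constraint $\sum_i x_i=\tfrac{1-c}{2}n$ (you write ``$\le$'' and then use tightness via monotonicity of $h$, whereas the paper uses equality directly---either works), reduce via Lemmas~\ref{lm:local optimal condition}--\ref{optimal value of x_i} to the one-parameter function $u_1(a)$, and solve the $u_1=u_1'=0$ system to get the same cubic $2s^3-(3+c)s+(1+c)=0$ with the switchover to the endpoint $a=\tfrac{1-c}{2}$ at $c=\tfrac12$. The only places the paper is more careful are in explicitly ruling out $a\in\{0,z\}$ (its Lemma~\ref{lm:consistency optimal solution for a_1}, the analogue of Lemma~\ref{lm:optimal solution for a} you cite) and in carrying out the final radical simplification for $\sqrt{1+\delta^2}$, but your sketch covers the same ideas.
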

\begin{proof}
We make the same assumptions as in \eqref{eq:main} about optimal facility $\facility$ (that $\facility_i>0$ for each $i\in[d]$), which is the same as prediction $\facility(\locs)=\pred$, and that the generalized median $\cmp=\vect{0}$. As the constraint of $\cmp$ being located at $\mathbf{0}$ can be written as $\sum_{i\in[n]}\sigma(\loc_i)=-c\cdot n$, we get the following optimization problem
\begin{align}
 \label{eq:consistency main}
 \min\limits_{\locs}& \sum_{i\in[n]} g(\loci), &\ 
 where~ g(\loci)\eqdef\qnorm[2]{\loci-\facility}-\lambda\qnorm[2]{\loci} \nonumber\\
 \text{s.t.}&  \sum_{i\in [n]}\sigma(\loci) = -cn\cdot\vect{1}, & 
\end{align}
We follow the same plan as in our analysis of the median mechanism with minor modifications. Namely, we have the same function $g(\loc)$ as in \eqref{eq:main} defined for $q=2$; apply Lemma~\ref{lm:expression for local optimum} for each $\loci\in\locs$, to express $g(\loci)$ as a function of $x_i$ as defined in \eqref{eq:relaxed main}. 
The only modification we need to make comes from a different constraint of \eqref{eq:main}, which becomes
\begin{equation}
    \label{eq: consistency relaxed_median_constraint}
    \sum\limits_{i\in[n]}x_i=\sum\limits_{i\in[n]}\energy[S(\loci)]=
    \sum_{i\in[n]}\sum_{j:p_{i,j}\ge 0^+}f_j^2
    =\sum_{j\in[d]}f_j^2\cdot\sum_{i:p_{i,j}\ge 0^+}1
    =\frac{n}{2}\cdot\sum_{j\in[d]}f_j^2=\frac{1-c}{2}\cdot n.
\end{equation}
Thus, we get the following analog of~\eqref{eq:relaxed main} for $q=2$.
\begin{align}
 \label{eq:consistency relaxed main}
 \min\limits_{\vx}& \sum_{i\in[n]} h(x_i)
 &\text{where }h(x_i)
 \eqdef \lambda\cdot\left(\delta\cdot(1-x_i)^{1/2}- x_i^{1/2}\right)\nonumber\\
 \text{s.t.}&  \sum_{i\in [n]}x_i = \frac{1-c}{2}\cdot n,&
\forall i\in[n]~x_i\in[0,1]
\end{align}
As before, $z\eqdef\delta^{-2/3}/(\delta^{-2/3}+1)<1/2$ for $q=2$; and by Lemma~\ref{optimal value of x_i}, the optimal value of $x_i$ lies in $\{ a,1\}$, where $a\in[0,z]$ and the space of optimal solution can be narrowed down to 2 parameters $a$ and $|\{i: x_i=a\}|$\footnote{Actually, Lemma~\ref{optimal value of x_i} helps to narrow down the optimal value of $x_i$ to 3 possible values $\{a,b,1\}$, but with the same argument as for the median mechanism, we can ignore the effect of $x_i=b$.}. 

As our new constraint is $\sum_{i=1}^n x_i = (1-c)\cdot n/2$, we get that the numbers $|\{i: x_i=a\}|$ and $|\{i: x_i=1\}|$ must be, respectively, $\frac{1+c}{2(1-a)}\cdot n$ and $\frac{1-2a-c}{2(1-a)}\cdot n$. Moreover, the number $|\{i: x_i=1\}|\geq 0$ implies that $1-2a-c\geq 0$. Therefore, we need to find $\lambda$ such that
\begin{multline}
    \label{eq: consistency optimization over a}   \min_{a\in\left[0,\min\{z,\frac{1-c}{2}\}\right]}\left[\frac{1+c}{2(1-a)}\cdot n\cdot h(a) + \frac{1-2a-c}{2(1-a)}\cdot n\cdot h(1)\right]\\
    =\min_{a\in\left[0,\min\{z,\frac{1-c}{2}\}\right]}\bigg[\frac{n\cdot\lambda}{2(1-a)}\cdot\Big((1+c)\cdot\left(\delta\cdot(1-a)^{1/q}-a^{1/q}\right)-1+2\cdot a+c\Big)\bigg]\ge 0.
\end{multline}
Thus, we need to minimize $$u_1(a)\eqdef(1+c)\cdot\left(\delta\cdot(1-a)^{1/q}-a^{1/q}\right)-1+2\cdot a+c.$$
Let $a_1\eqdef \argmin\limits_{a\in\left[0,\min\{z,\frac{1-c}{2}\}\right]} u_1(a)$. We observe that
\begin{lemma}
    \label{lm:consistency optimal solution for a_1}
    If $u_1(a_1)=0$, then $a_1$ is optimal $\Rightarrow u_1'(a_1)=0\ or\ a_1=\frac{1-c}{2}$.
\end{lemma}
The proof of Lemma~\ref{lm:consistency optimal solution for a_1} is 
essentially the same as the proof of Lemma~\ref{lm:optimal solution for a}:
we simply verify that if $u_1(a_1)=0$, $a_1\in\{0,z\}$ are not optimal, and that $u_1'(a)=0$ has unique 
solution on $(0,z)$ (the only difference is that we have an extra constraint $a_1\le \frac{1-c}{2}$, which is appropriately reflected in the statement of Lemma~\ref{lm:consistency optimal solution for a_1}).
Proof with full details is deferred to the Appendix.

To conclude the proof of Theorem~\ref{thm:consistency_guarantee}, we need to find $\lambda_1$ (analog of $\lambda^*$ from the proof of Theorem~\ref{thm:UB}) such that the minimum of \eqref{eq:consistency relaxed main} and respectively \eqref{eq: consistency optimization over a} is equal to $0$. This will ensure an $1/\lambda_1$-consistency guarantee. By Lemma~\ref{lm:consistency optimal solution for a_1}, the optimal $\lambda_1$ and $a_1$ must satisfy: (i) $u_1(a_1)=0$ and (ii) $u_1'(a_1)=0$ or $a_1=(1-c)/2$. Then by denoting $\delta_1=\delta(\lambda_1)$, we find  $a_1$, $\delta_1$, $\lambda_1$ in the same way as before. We move detailed calculations to Appendix and conclude that
\begin{equation*}
    \lambda_1=\left\{
    \begin{aligned}
    &(c+1)/\left(4\sqrt{2c+3}\cdot c+6\sqrt{2c+3}-10c-8\right)^{\frac{1}{2}},&c\in[0,1/2)\\
    &\sqrt{\frac{c+1}{2}},&c\in[1/2,1).
    \end{aligned}
    \right.
\end{equation*}
\end{proof}

In the robustness scenario, instead of having $\pred=\facility$, the prediction $\pred$ can be anywhere in $\reals^d$. Thus, we first need to understand what is the worst-case placement of $\pred$ for $\cmp(\locs,\pred)$. Now, if we normalize the problem so that $\cmp=\vect{0}$, our constraint in \eqref{eq:main} becomes $\sum_{i\in[n]}\sigma(\loc_i)=-cn\cdot\sigma(\pred)$. Since this constraint only depends on the signature of $\pred$, the respective mathematical program for the robustness is as follows.
\begin{align}
 \label{eq:optimization for sigma}
 \min\limits_{\locs,\sigma(\pred)}&\sum_{i\in[n]} g(\loci) &\text{where }g(\loci)=\qnorm[2]{\loci-\facility}-\lambda\cdot\qnorm[2]{\loci} \nonumber\\
 \text{s.t.}&  \sum_{i\in [n]}\sigma(\loci) = -cn\cdot\sigma(\pred),& 
\end{align}
Let $\sigma^*$ and $\loc^*$ be the optimal solution to \eqref{eq:optimization for sigma}. It is rather straightforward to see that $\sigma^*=\sigma(\pred)=(-1,\ldots,-1)$.
\begin{lemma} 
    \label{lm:optimal solution for sigma(pred)}
    Without loss of generality we can assume that 
    $\sigma^*=(-1,-1,\cdots,-1)$.
\end{lemma}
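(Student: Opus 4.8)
The plan is to run the same reduction as in the proofs of Theorems~\ref{thm:UB} and~\ref{thm:consistency_guarantee}, and then observe that, among all signatures $\sigma(\pred)\in\{-1,1\}^d$, the choice $(-1,\ldots,-1)$ produces the largest budget in the resulting relaxed one–parameter problem, hence the smallest (worst) objective value, so it is without loss to restrict to it.

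First I would fix $\sigma:=\sigma(\pred)$ and analyze the inner minimization over $\locs$. Applying Lemma~\ref{lm:local optimal condition} and Lemma~\ref{lm:expression for local optimum} to each $\loci$ of an optimal configuration gives $g(\loci)\ge h(x_i)$ with $x_i:=\energy[S(\loci)]\in[0,1]$ and $h(x)=\lambda\big(\delta(1-x)^{1/q}-x^{1/q}\big)$ (equality whenever $S(\loci)\ne\emptyset$, which is without loss as $d\to\infty$, exactly as in the proof of Theorem~\ref{thm:UB}; here $q=2$). The constraint $\sum_{i\in[n]}\sigma(\loci)=-cn\cdot\sigma$ forces, in coordinate $j$, precisely $\tfrac n2(1-c\sigma_j)$ of the points to have nonnegative $j$-th coordinate, so weighting by $f_j^q$ and summing over $j$ yields the relaxed aggregate constraint
\[
\sum_{i\in[n]}x_i\;=\;\frac n2\Big(1-c\sum_{j\in[d]}f_j^q\,\sigma_j\Big)\;=:\;V(\sigma),
\]
and therefore $\sum_{i\in[n]}g(\loci)\ge\Phi\big(V(\sigma)\big)$, where $\Phi(V):=\min\{\sum_{i\in[n]}h(x_i):\ \sum_i x_i=V,\ x_i\in[0,1]\}$ for $V\in[0,n]$.

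The crux is a monotonicity observation about $\Phi$: since $h'(x)=\lambda\big(-\tfrac{\delta}{q}(1-x)^{(1-q)/q}-\tfrac1q x^{(1-q)/q}\big)<0$ on $(0,1)$ (recall $\delta\ge1$, $\lambda>0$), the function $h$ is strictly decreasing, so from a minimizer for a budget $V_1$ one may raise coordinates that are still below $1$ to reach any larger budget $V_2\in[V_1,n]$ without increasing $\sum_i h(x_i)$; hence $\Phi$ is non-increasing on $[0,n]$. Now $V(\sigma)=\tfrac n2\big(1-c\,\iprod{\f^q}{\sigma}\big)$ with $\iprod{\f^q}{\sigma}=\sum_j f_j^q\sigma_j\in[-1,1]$ (because $f_j^q>0$ and $\sum_j f_j^q=\qnorm{\facility}^q=1$), and it attains its maximum $\tfrac n2(1+c)\in(0,n)$ exactly at $\sigma=(-1,\ldots,-1)$. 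Consequently, for every feasible pair $(\locs,\sigma)$,
\[
\sum_{i\in[n]}g(\loci)\;\ge\;\Phi\big(V(\sigma)\big)\;\ge\;\Phi\big(\tfrac n2(1+c)\big),
\]
so the value of \eqref{eq:optimization for sigma} is lower bounded by what one obtains from $\sigma=(-1,\ldots,-1)$ alone, while this signature is itself admissible and hence also upper bounds the minimum. This justifies assuming $\sigma^*=(-1,\ldots,-1)$, after which the robustness bound is derived just like the consistency bound but with the budget $\tfrac n2(1+c)$ in place of $\tfrac n2(1-c)$.

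I expect the only finicky part to be the bookkeeping for points with $S(\loci)=\emptyset$ (where $g(\loci)$ is merely lower bounded by $h(0)$) together with the limiting passage $n,d\to\infty$ that makes the relaxation tight; both are handled verbatim as in the proof of Theorem~\ref{thm:UB}, so no genuinely new difficulty arises — the entire content of the lemma is the single fact that $\Phi$ is non-increasing and that $(-1,\ldots,-1)$ maximizes $V(\sigma)$.
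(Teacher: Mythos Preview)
Your route is quite different from the paper's. The paper gives a completely elementary coordinate-flipping argument on the \emph{unrelaxed} problem~\eqref{eq:optimization for sigma}: if some $\sigma^*_k=+1$, pick any $cn$ locations whose $k$-th coordinate has negative sign, negate that coordinate, and switch $\sigma^*_k$ to $-1$. The new pair is feasible, $\qnorm[2]{\loci}$ is unchanged, while $|p_{i,k}-f_k|$ can only shrink (since $f_k>0$), so each $g(\loci)$ weakly decreases. This proves directly that the minimizer of~\eqref{eq:optimization for sigma} has $\sigma^*=(-1,\dots,-1)$, with no recourse to the relaxation, to Lemmas~\ref{lm:local optimal condition}--\ref{lm:expression for local optimum}, or to monotonicity of $h$.

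Your argument, by contrast, first passes to the relaxed one-parameter problem $\Phi(V)$ and then uses monotonicity of $h$ (hence of $\Phi$) together with the observation that $V(\sigma)$ is maximized at $\sigma=(-1,\dots,-1)$. This is correct and certainly suffices for Theorem~\ref{thm:robustness_guarantee}, since the subsequent analysis only ever uses the relaxed bound $\Phi\big(\tfrac{n}{2}(1+c)\big)$. But note that it does \emph{not} literally establish the lemma as stated: you obtain
\[
\Phi\!\left(\tfrac{n}{2}(1+c)\right)\;\le\;\min_{(\locs,\sigma)}\sum_{i} g(\loci)\;\le\;\min_{\locs:\,\sigma=(-1,\dots,-1)}\sum_{i} g(\loci),
\]
and you have no argument that the two bounds coincide, so you have not shown that an optimizer $\sigma^*$ of the unrelaxed problem~\eqref{eq:optimization for sigma} equals $(-1,\dots,-1)$. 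The paper's direct argument does. In short: the paper's proof is shorter, self-contained, and proves the stronger statement; yours is a legitimate shortcut to the robustness bound that bypasses the lemma rather than proving it.
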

\begin{proof} Assume towards the contradiction that there is a coordinate $k$ with $\sigma^*_k=1$. Then we shall change $\sigma^*_k$ from $1$ to $-1$ ($\sigma'=(\sigma^*_{\text{-}k},-1)$) and also modify the set of locations to $\locs'$, such that $(\locs',\sigma')$ satisfy the constraint and achieve even smaller objective value $\sum_{i\in[n]}g(\loci')\le\sum_{i\in[n]}g(\loci)$. 

We construct $\locs'$ as follows. For any $j\neq k$ and $i\in[n]$, let $p'_{i,j}=p^*_{i,j}$. For coordinate $k$, we select any $cn$ locations $\loc^*_{(1)},\loc^*_{(2)},\cdots,\loc^*_{(cn)} \in \locs^*$ with the signature of the $k$-th coordinate being -1 and let $p'_{(i),k}=-p^*_{(i),k}$. The $k$-th coordinate remains the same between $\locs'$ and $\locs^*$ for the rest $(1-c)\cdot n$ locations. Now, since $f_j>0$ for all $j\in d$, $|p'_{(i),k}-f_k|\leq|p^*_{(i),k}-f_k|$. Therefore, 
    \begin{align*}
        \sum_{i\in[n]} g(\loci')&=\sum_{i\in[n]}\qnorm[2]{\loci'-\facility}-\lambda\cdot\qnorm[2]{\loci'}\\
        &\leq \sum_{i\in[n]}\qnorm[2]{\loci^*-\facility}-\lambda\cdot\qnorm[2]{\loci^*}\\
        &=\sum_{i\in[n]} g(\loci^*).
    \end{align*}
     I.e., $\locs'$ produce a smaller or equal objective value.
\end{proof}
Hence, \eqref{eq:optimization for sigma} can be written as
\begin{align}
 \label{eq:robustness main}
\min\limits_{\locs}& \sum_{i\in[n]} g(\loci), &\ 
 where~ g(\loci)\eqdef\qnorm[2]{\loci-\facility}-\lambda\qnorm[2]{\loci} \nonumber\\
 \text{s.t.}&  \sum_{i\in [n]}\sigma(\loci) = cn\cdot\vect{1}, &
\end{align}
Notice that the optimization problem~\eqref{eq:robustness main} is very similar to~\eqref{eq:consistency main}. We implement the same reasoning as in the proof of Theorem~\eqref{thm:consistency_guarantee} and obtain the following robustness guarantee 
\begin{theorem}
    \label{thm:robustness_guarantee}
    The robustness guarantee of $\CMP(c)$ mechanism in $\lqnorm[2](\R^d)$ is at most
    \begin{align*}
    \left( -4\sqrt{3-2c}\cdot c + 6\sqrt{3-2c} + 10c - 8 \right)^{1/2} / (1-c).
    \end{align*}
\end{theorem}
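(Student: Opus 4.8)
The plan is to mirror the proof of Theorem~\ref{thm:consistency_guarantee} almost verbatim, with the sign of the median constraint flipped from $-cn\cdot\onevec$ to $+cn\cdot\onevec$ as established by Lemma~\ref{lm:optimal solution for sigma(pred)}. First I would reduce~\eqref{eq:robustness main} to a one-dimensional optimization by the same sequence of steps: apply Lemma~\ref{lm:expression for local optimum} for $q=2$ to rewrite each $g(\loci)$ as a function of $x_i\eqdef\energy[S(\loci)]$; then the relaxed median constraint becomes $\sum_{i\in[n]}x_i=\tfrac{n}{2}\sum_{j\in[d]}f_j^2\cdot\tfrac{?}{?}$ — here the bookkeeping changes, since now $cn$ extra copies of $\pred=\facility$ contribute on the \emph{negative} side (the signature of $\pred$ is $(-1,\ldots,-1)$), so the number of original points with $p_{i,j}\ge 0^+$ in coordinate $j$ is $\tfrac{n+cn}{2}$ rather than $\tfrac{n-cn}{2}$, giving $\sum_{i\in[n]}x_i=\tfrac{1+c}{2}\cdot n$. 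This is the single real difference from the consistency case.

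Next I would invoke Claim~\ref{cl:property of h(x)} and Lemma~\ref{optimal value of x_i} to conclude that in the optimum every $x_i\in\{a,b,1\}$ with $a\in[0,z]$, $b\in[z,1)$, at most one index equal to $b$, and — letting $n\to\infty$ — ignore the $b$-point. With $\sum x_i=\tfrac{1+c}{2}n$ the counts become $|\{i:x_i=a\}|=\tfrac{1-c}{2(1-a)}\cdot n$ and $|\{i:x_i=1\}|=\tfrac{1-2a+c}{2(1-a)}\cdot n$ (requiring $1-2a+c\ge0$, which is automatic since $a\le z<1/2$), and the objective divided by $\tfrac{n\lambda}{2(1-a)}$ reduces to minimizing
\[
u_2(a)\eqdef(1-c)\cdot\bigl(\delta\cdot(1-a)^{1/2}-a^{1/2}\bigr)-1+2a+c.
\]
I would then state and prove the analog of Lemma~\ref{lm:consistency optimal solution for a_1}: if $u_2(a_2)=0$ and $a_2$ is optimal then $u_2'(a_2)=0$ (the extra constraint $a\le\tfrac{1-c}{2}$ present in the consistency case is vacuous here because the feasibility bound is $1-2a+c\ge0$, which never binds on $[0,z]$). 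This follows exactly as in Lemma~\ref{lm:optimal solution for a}: check $a\in\{0,z\}$ are not optimal and that $u_2'(a)=0$ has a unique root on $(0,z)$.

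Finally I would solve the system $u_2(a_2)=0$, $u_2'(a_2)=0$ for $q=2$ in closed form: differentiating, $u_2'(a)=(1-c)\bigl(-\tfrac12\delta(1-a)^{-1/2}-\tfrac12 a^{-1/2}\bigr)+2$; multiplying the derivative equation by $(1-a)$ and substituting $\delta\cdot(1-a)^{1/2}-a^{1/2}=\tfrac{1-2a+c}{1-c}$ from $u_2=0$ yields a single equation in $a$ of the same shape as~\eqref{eq:optimal a^*} but with $(1-c)$ in place of $1$ in the linear coefficient, which solves to an explicit $a_2$; then $\delta_2=\bigl((1-c)a_2^{1/2}+1-2a_2-c\bigr)/\bigl((1-c)(1-a_2)^{1/2}\bigr)$ and $\lambda_2=(1+\delta_2^2)^{-1/2}$ by inverting $\delta(\lambda)=(\lambda^{-2}-1)^{1/2}$. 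Tracking constants gives $\lambda_2=(1-c)/\bigl(-4\sqrt{3-2c}\cdot c+6\sqrt{3-2c}+10c-8\bigr)^{1/2}$, i.e.\ robustness $\le 1/\lambda_2$ as claimed. The main obstacle is purely the algebraic bookkeeping — keeping the $(1\pm c)$ factors straight through the constraint, the two point-count expressions, and the final radical — since every structural step is already licensed by the lemmas proved for the median and consistency cases; I would defer the detailed computation to the Appendix as was done for Theorem~\ref{thm:consistency_guarantee}.
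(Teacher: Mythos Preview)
Your approach is essentially identical to the paper's: reduce~\eqref{eq:robustness main} via Lemma~\ref{lm:expression for local optimum} to the relaxed constraint $\sum_i x_i=\tfrac{1+c}{2}n$, apply Lemma~\ref{optimal value of x_i} to get counts $\tfrac{1-c}{2(1-a)}n$ and $\tfrac{1-2a+c}{2(1-a)}n$, solve $u_2(a_2)=u_2'(a_2)=0$, and invert $\delta(\lambda)$. Your observation that the boundary case $a=\tfrac{1+c}{2}$ is vacuous (since $a\le z<\tfrac12$) is correct and slightly cleaner than the paper, which states it as a case in the lemma and then checks it never occurs.

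There are, however, two sign slips of exactly the $(1\pm c)$ kind you warned yourself about. From your own counts, the contribution of the $x_i=1$ points is $-\tfrac{1-2a+c}{2(1-a)}n\lambda$, so the reduced function is
\[
u_2(a)=(1-c)\bigl(\delta(1-a)^{1/2}-a^{1/2}\bigr)-1+2a-c,
\]
with $-c$ at the end, not $+c$; correspondingly, $u_2=0$ gives $\delta(1-a)^{1/2}-a^{1/2}=\tfrac{1-2a+c}{1-c}$ (which is what you actually substitute two lines later, so you caught it there), and hence
\[
\delta_2=\frac{(1-c)a_2^{1/2}+1-2a_2+c}{(1-c)(1-a_2)^{1/2}},
\]
with $+c$ in the numerator, not $-c$. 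With these two signs fixed the computation goes through to the stated $\lambda_2$.
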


 We defer the detailed proof to Appendix.

\subsection*{Comparison between $\CMP(c)$ in $\reals^d$ and $\reals^2$}
The consistency and robustness approximation guarantees of the coordinate-wise median mechanism from~\cite{AgrawalBGTX22} are respectively $\sqrt{2c^2+2}/(c+1)$ and $\sqrt{2c^2+2}/(1-c)$ in $\reals^2$. We compare these results to the performance of $\CMP(c)$ in $\lqnorm[2](\reals^d)$ for arbitrary $d$, given by our Theorems~\ref{thm:consistency_guarantee} and~\ref{thm:robustness_guarantee}.

In figure~\ref{fig:plot of $r_a$ and $r_b$}, we plot two curves corresponding to the ratio $r_a\eqdef$ consistency guarantee in $\reals^d$/consistency guarantee in $\reals^2$ and to the ratio $r_b\eqdef$ robustness guarantee in $\reals^d$/robustness guarantee in $\reals^2$ depending on the parameter $c\in[0,1)$. Specifically,

\begin{equation*}
r_a=
\left\{
\begin{aligned}
&(2\sqrt{2c+3}\cdot c+3\sqrt{2c+3}-5c-4)^{\frac{1}{2}}/(c^2+1)^{\frac{1}{2}}, &c\in[0,1/2)\\
&(c+1)^{\frac{1}{2}}/(c^2+1)^{\frac{1}{2}},  &c\in[1/2,1)
\end{aligned}
\right.
\end{equation*}
and,
\begin{equation*}
r_b=(-2\sqrt{3-2c}\cdot c+3\sqrt{3-2c}+5c-4)^{\frac{1}{2}}/(c^2+1)^{\frac{1}{2}}.
\end{equation*}
When $c=0$, $r_a=r_b=\sqrt{6\sqrt{3}-8}/\sqrt{2}\approx1.09$. From figure~\ref{fig:plot of $r_a$ and $r_b$}, $r_a$ increases at first and then decreases to $1$, but it never exceeds $1.11$, while $r_b$ decreases monotonously from roughly $1.094$ to $1$. This indicates that the generalization of the CMP mechanism from $\R^2$ to $\R^d$ won't suffer a performance loss (both for consistency and for robustness) more than $11\%$.
\begin{figure}[htb] 
		\centering
		\includegraphics[width=5.0in]{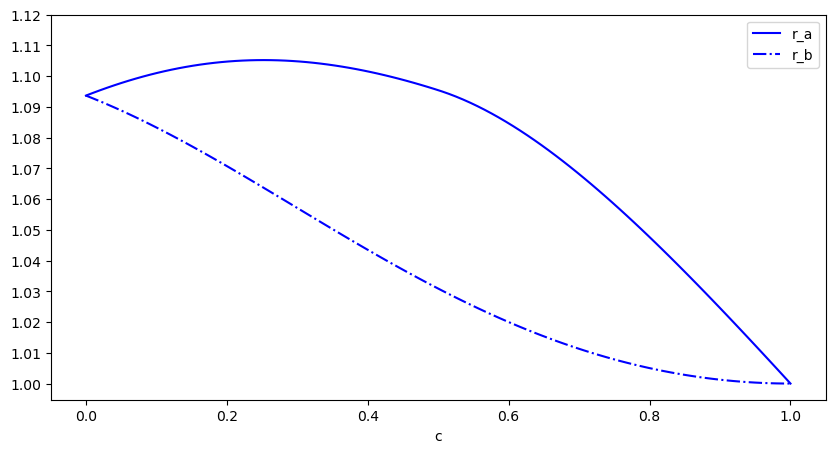}
		\caption{plot of $r_a$ and $r_b$}
\label{fig:plot of $r_a$ and $r_b$}		
\end{figure}

On the other hand, when comparing asymptotic convergence rate of both consistency and robustness for confidence parameter $c$ approaching $c\rightarrow 1$ of $\CMP(c)$ in $\reals^d$ against $\CMP(c)$ in $\reals^2$, we observe that robustness guarantees are essentially the same, while consistency guarantees converge to $1$ at different rates. Specifically, both robustness guarantees are of order $\Theta(1/(1-c))$, and the Taylor expansions of the consistency guarantee for $\eps=1-c$ is  $\sqrt{2/(c+1)}=1+\frac{\eps}{4}+O(\eps^2)$ in $\reals^d$, and 
$\sqrt{2c^2+2}/(c+1)=1+\frac{\eps^2}{8}+O(\eps^3)$ in $\reals^2$.


\section{Conclusions}
\label{sec:conclusions}

In this work, we derived tight approximation ratios for the coordinate-wise median mechanism in $\lqnorm(\reals^d)$ for all $q\ge 1$ and unbounded dimension $d\to+\infty$.
We found that these approximation ratios are all constants never exceeding $3$, which in Euclidean space $\lqnorm[2]$ are only slightly larger than the respective approximation ratio in low-dimensional space ($1.55$ for $\lqnorm[2](\reals^d)$ versus $1.41$ for $\lqnorm[2](\reals^2)$). Hence, our work implies that there is no reason from the perspective of approximation guarantees to restrict attention to $\lqnorm[2](\reals^2)$ case in strategic facility location setting. On the other hand, there is still a lack of satisfying characterization of strategy-proof mechanism in higher-dimensions $d\ge 3$.

Namely, the existing characterizations that work for general $d$-dimensional spaces ~\cite{border1983straightforward,peremans1997strategy} allow agents to have different  quadratic separable preferences, which places more restriction on the space of possible strategy-proof mechanisms than in the strategic facility location setting, where agents' costs are defined by the same distance function. I.e., there could be other strategy-proof anonymous mechanisms than generalized median. Obtaining such  characterization for any space $\lqnorm(\reals^d)$ seems an interesting question.





\bibliographystyle{abbrv}
\bibliography{facility}

\appendix

\section{Missing proofs}
\label{sec:appendix}
\subsection{Proof of Lemma~\ref{lm:expression for local optimum}}
\begin{proof}
For $S(\loc)\neq\emptyset$, according to Lemma~\ref{lm:local optimal condition}, we get $p_j=f_j/(1-c)$ for $j\in S$. Hence, $\qnorm{\loc-\facility}$ and $\qnorm{\loc}$ can be expressed as functions of $c$:
\begin{equation*}
\begin{split}
&\qnorm{\loc-\facility}=\left(\sum_{j\in S}(p_j-f_j)^q+\sum_{j\in \overline{S}}f_j^q\right)^{1/q}=\left(\left(\frac{c}{1-c}\right)^q\energy+\energyc\right)^{1/q}\\
&\qnorm{\loc}=\left(\sum_{j\in S}p_j^q\right)^{1/q}=\frac{1}{1-c}\cdot\energy^{1/q}.
\end{split}
\end{equation*}
Recall that $c=\lambda^{1/(q-1)}\cdot\qnorm{\loc-\facility}/\qnorm{\loc}$, we now derive the following equations that allow us to eliminate dependency on $c$ of $g(\loc)$:
\begin{align*}
&c=\lambda^{1/(q-1)}\cdot\frac{\left((\frac{c}{1-c})^q\cdot\energy+\energyc\right)^{1/q}}{\frac{1}{1-c}\cdot\esq}=\lambda^{1/(q-1)}\cdot\left(c^q+(1-c)^q\cdot\frac{\energyc}{\energy}\right)^{1/q}\\
&\Leftrightarrow c^q=\lambda^{q/(q-1)}\cdot\left(c^q+(1-c)^q\cdot\frac{\energyc}{\energy}\right)\\
&\Leftrightarrow c^q\cdot\left(1-\lambda^{q/(q-1)}\right)=\lambda^{q/(q-1)}\cdot(1-c)^q\cdot\frac{\energyc}{\energy}\\
&\Leftrightarrow\frac{c}{1-c}=\left(\frac{\energyc}{\energy}\right)^{1/q}\cdot\left(\frac{\lambda^{q/(q-1)}}{1-\lambda^{q/(q-1)}}\right)^{1/q}\\
&\Leftrightarrow\frac{1}{1-c}=\left(\frac{\energyc}{\energy}\right)^{1/q}\cdot\left(\frac{\lambda^{q/(q-1)}}{1-\lambda^{q/(q-1)}}\right)^{1/q}+1.
\end{align*}
Therefore,
\begin{align*}
g(\loc)&=\qnorm{\facility-\loc}-\lambda\cdot\qnorm{\loc}\\
&=\left(\left(\frac{c}{1-c}\right)^q\cdot\energy+\energyc\right)^{1/q}-\frac{1}{1-c}\cdot\energy^{1/q}\\
&=\left(\frac{\lambda^{q/(q-1)}}{1-\lambda^{q/(q-1)}}\cdot\energyc+\energyc\right)^{1/q}-\lambda\cdot\left(\left(\frac{\energyc}{\energy}\right)^{1/q}\cdot\left(\frac{\lambda^{q/(q-1)}}{1-\lambda^{q/(q-1)}}\right)^{1/q}+1\right)\cdot \energy^{1/q}\\
&=\left(\frac{1}{1-\lambda^{q/(q-1)}}\cdot\energyc\right)^{1/q}-\lambda^{q/(q-1)}\cdot \energyc^{1/q}\cdot\left(\frac{1}{1-\lambda^{q/(q-1)}}\right)^{1/q}-\lambda\cdot\energy^{1/q}\\
&=\left(1-\lambda^{q/(q-1)}\right)^{(q-1)/q}\cdot\ecsq-\lambda\cdot\esq.
\end{align*}
For $S(\loc)=\emptyset$, we first consider the situation when $T=\emptyset$. In this case, $\loc=\zerovec$, which is not local optimum. When $T\neq\emptyset$, for a fixed $T$, according to Lemma~\ref{lm:local optimal condition}, $(f_j-p_j)/(-p_j)=c$ for $j\in T$, which is equivalent to $-p_j=f_j/(c-1)$. Hence, $\qnorm{\facility-\loc}$ and $\qnorm{\loc}$ can be expressed as functions of c:
\begin{align*}
&\qnorm{\facility-\loc}=\left(\sum_{j\in T}(f_j-p_j)^q+\sum_{j\in \overline{T}}f_j^q\right)^{1/q}=\left(\left(\frac{c}{c-1}\right)^q\energyT+\energyTc\right)^{1/q}\\
&\qnorm{\loc}=\left(\sum_{j\in T}(-p_j)^q\right)^{1/q}=\frac{1}{c-1}\cdot\energyT^{1/q}.
\end{align*}
Recall that $c=\lambda^{1/(q-1)}\cdot\qnorm{\loc-\facility}/\qnorm{\loc}$, we now derive the following equations that allow us to eliminate dependency on $c$ of $g(\loc)$:
\begin{align*}
&c=\lambda^{1/(q-1)}\cdot\frac{\left((\frac{c}{c-1})^q\cdot\energyT+\energyTc\right)^{1/q}}{\frac{1}{c-1}\cdot\energyT^{1/q}}=\lambda^{1/(q-1)}\cdot\left(c^q+(c-1)^q\cdot\frac{\energyTc}{\energyT}\right)^{1/q}\\
&\Leftrightarrow c^q=\lambda^{q/(q-1)}\cdot\left(c^q+(c-1)^q\cdot\frac{\energyTc}{\energyT}\right)\\
&\Leftrightarrow c^q\cdot\left(1-\lambda^{q/(q-1)}\right)=\lambda^{q/(q-1)}\cdot(c-1)^q\cdot\frac{\energyTc}{\energyT}\\
&\Leftrightarrow\frac{c}{c-1}=\left(\frac{\energyTc}{\energyT}\right)^{1/q}\cdot\left(\frac{\lambda^{q/(q-1)}}{1-\lambda^{q/(q-1)}}\right)^{1/q}\\
&\Leftrightarrow\frac{1}{c-1}=\left(\frac{\energyTc}{\energyT}\right)^{1/q}\cdot\left(\frac{\lambda^{q/{q-1}}}{1-\lambda^{q/(q-1)}}\right)^{1/q}-1.
\end{align*}
Therefore,
\begin{align*}
g(\loc)&=\qnorm{\facility-\loc}-\lambda\cdot\qnorm{\loc}\\
&=\left(\left(\frac{c}{1-c}\right)^q\cdot\energyT+\energyTc\right)^{1/q}-\frac{1}{1-c}\cdot\energyT^{1/q}\\
&=\left(\frac{\lambda^{q/(q-1)}}{1-\lambda^{q/(q-1)}}\cdot\energyTc+\energyTc\right)^{1/q}-\lambda\cdot\left(\left(\frac{\energyTc}{\energyT}\right)^{1/q}\cdot\left(\frac{\lambda^{q/(q-1)}}{1-\lambda^{q/(q-1)}}\right)^{1/q}-1\right)\cdot \energyT^{1/q}\\
&=\left(\frac{1}{1-\lambda^{q/(q-1)}}\cdot\energyTc\right)^{1/q}-\lambda^{q/(q-1)}\cdot \energyTc^{1/q}\cdot\left(\frac{1}{1-\lambda^{q/(q-1)}}\right)^{1/q}+\lambda\cdot\energyT^{1/q}\\
&=\left(1-\lambda^{q/(q-1)}\right)^{(q-1)/q}\cdot\ectq+\lambda\cdot\etq.
\end{align*}
Then, we minimize over all possible $T$ to solve for $g(\loc)$ under $T\neq\emptyset$:
\begin{align*}
g(\loc)=\min\limits_{\substack{T\subseteq[d]\\T\neq\emptyset}}\left[\left(1-\lambda^{q/(q-1)}\right)^{(q-1)/q}\cdot\ectq+\lambda\cdot\etq\right].
\end{align*}
We view $\left(1-\lambda^{q/(q-1)}\right)^{(q-1)/q}\cdot\ectq+\lambda\cdot\etq$ as a function of $\energyT$ and, to simplify notation denote $x\eqdef\energyT$. First, we figure out the domain of $x$. Since $c>1$ from Lemma~\ref{lm:local optimal condition}, we have:
\begin{align*}
    \left(\frac{\energyTc}{\energyT}\right)^{1/q}\cdot\left(\frac{\lambda^{q/(q-1)}}{1-\lambda^{q/(q-1)}}\right)^{1/q}=\frac{c}{c-1}>1
\Leftrightarrow\quad \frac{\energyTc}{\energyT}\cdot\frac{\lambda^{q/(q-1)}}{1-\lambda^{q/(q-1)}}>1
\quad\Leftrightarrow\quad \energyT<\lambda^{q/q-1}.
\end{align*}
Together with $\energyT>0$, we have $x=\energyT\in(0,\lambda^{q/q-1})$. Then let's consider function
\begin{align*}
    t(x)=\left(1-\lambda^{q/(q-1)}\right)^{(q-1)/q}\cdot(1-x)^{1/q}+\lambda\cdot x^{1/q}.
\end{align*}
Since both $(1-x)^{1/q}$ and $x^{1/q}$ are concave in $x$, $t(x)$ (as a non-negative weighted sum of $(1-x)^{1/q}$ and $x^{1/q}$) is also a concave function of $x$. Hence, its minimum value is achieved either when $x$ goes to $0$, or when $x$ goes to $\lambda^{q/q-1}$. As
\begin{align*}
    t(x)\left\vert_{x\rightarrow 0}=1-\lambda^{q/(q-1)}<t(x)\right\vert_{x\rightarrow \lambda^{q/q-1}}=1, 
\end{align*}
we conclude that $g(\loc)\geq1-\lambda^{q/(q-1)}$ when $T\neq\emptyset$.
\end{proof}

\subsection{Proof of Lemma~\ref{lm:optimal solution for a}}
\begin{proof}
The optimal condition for $u(a)$ is that either $a^*\in\{0,z\}$ or $u'(a)=0$. Specifically,
\begin{align*}
   u'(a)=\frac{1}{q}\cdot\left(-\delta\cdot(1-a)^{(1-q)/q}-a^{(1-q)/q}\right)+2.
\end{align*}
First, notice that $u'(a)\rightarrow -\infty$ when $a\rightarrow 0$, saying $a=0$ is not optimal.
Then, we observe that $u'(a)=0\Leftrightarrow w(a)\eqdef\frac{1}{q}\cdot\left(-\delta\cdot(1-a)^{(1-q)/q}-a^{(1-q)/q}\right)+2=0$. We prove that there exists at most one solution to $u'(a)=0$ by showing that $w(a)$ is strictly increasing in $a$ on $a\in(0,z)$:
\begin{align*}
  w'(a)&=\frac{1-q}{q^2}\cdot(1-a)\cdot\left(\delta\cdot(1-a)^{(1-2q)/q}-a^{(1-2q)/q}\right) \\
  &= \frac{q-1}{q^2}\cdot(1-a)\cdot\left(-\delta\cdot(1-a)^{(1-2q)/q}+a^{(1-2q)/q}\right)> 0.
\end{align*}
The last inequality follows from the fact that $-\delta\cdot(1-x)^{(1-2q)/q}+x^{(1-2q)/q}$ is convex in $x$ when $x\in(0,z)$.
On the other hand, we prove that there exists one unique solution to $u'(a)=0$ on $a\in(0,z)$ by contradiction. We assume that there is no solution for $u'(a)=0$ on $a\in(0,z)$. I.e., $u'(z)<0$ and therefore $a^*=z$. Now, since we have $u(a^*)=0$, we get the following system of inequalities on $\delta$ and $z$.
\begin{equation}
\label{eq:z}
\left\{
    \begin{aligned}
    &\delta\cdot(1-z)^{1/q}-{z}^{1/q}-1+2\cdot z=0\\
    &\frac{1}{q}\cdot\left(-\delta\cdot(1-z)^{(1-q)/q}-z^{(1-q)/q}\right)+2<0.
    \end{aligned}
    \right.
\end{equation}
We multiply $(1-z)$ to both side of the second inequality of \eqref{eq:z}, which gives us
\begin{equation}
\label{eq:inequality of z}
    \frac{1}{q}\cdot\left(-\delta\cdot(1-z)^{1/q}+z^{1/q}-z^{(1-q)/q}\right)+2-2z<0.
\end{equation}
By adding the first equation in \eqref{eq:z} to both side of \eqref{eq:inequality of z}, we get
\begin{equation}
\label{eq:inequality of z(2)}
    \left(1-\frac{1}{q}\right)\cdot\left(\delta\cdot(1-z)^{1/q}-z^{1/q}\right)-\frac{1}{q}\cdot z^{(1-q)/q}+1<0.
\end{equation}
However,
\begin{align*}
   &\left(1-\frac{1}{q}\right)\cdot\left(\delta\cdot(1-z)^{1/q}-z^{1/q}\right)-\frac{1}{q}\cdot z^{(1-q)/q}+1\\
   &=\left(1-\frac{1}{q}\right)\cdot\left(z^{(1-2q)/q}\cdot(1-z)^2-z^{1/q}\right)-\frac{1}{q}\cdot z^{(1-q)/q}+1\\
   &=z^{(1-2q)/q}\cdot\left(\left(1-\frac{1}{q}\right)\cdot\left((1-z)^2-z^2\right)-\frac{1}{q}\cdot z\right)+1\\
   &=z^{(1-2q)/q}\cdot\left(1-2z-(1-z)\cdot\frac{1}{q}\right)+1\\
    &\ge -z^{(1-q)/q}\cdot\frac{1}{2}+1\\
   &\ge -\frac{1}{2z}+1\\
   &\geq 0.
\end{align*}
The equation in the second line follows from the fact that $\delta\cdot(1-z)^{(1-2q)/q}=z^{(1-2q)/q}$. The inequalities in the 5-th, 6-th, and 7-th lines hold, since $q\ge 1$ and $z\leq1/2$. Hence, there exists exactly one point with $u'(a)=0$ when $a\in [0,z]$. 
\end{proof}

\subsection{Proof of Lemma~\ref{lm:consistency optimal solution for a_1}}
\begin{proof}
The optimal condition for $u_1(a)$ is that $$a_1\in\{0,z,
\frac{1-c}{2}\}\ or~ u_1'(a_1)=0.$$ Specifically,
\begin{align*}
   u_1'(a)=\frac{1}{2}\cdot(1+c)\cdot\left(-\delta\cdot(1-a)^{-1/2}-a^{-1/2}\right)+2.
\end{align*}
First, notice that $u_1'(a)\rightarrow -\infty$ when $a\rightarrow 0$, showing that $a=0$ is not optimal.

Then, we observe that $u_1'(a)=0\Leftrightarrow w_1(a)\eqdef\frac{1}{2}\cdot(1+c)\cdot\left(-\delta\cdot(1-a)^{-1/2}-a^{-1/2}\right)+2=0$. We prove that there exists at most one solution to $u_1'(a)=0$ by showing that $w_1(a)$ is strictly increasing in $a$ on $a\in(0,z)$:
\begin{align*}
  w_1'(a)&=-\frac{1}{4}\cdot(1+c)\cdot(1-a)\cdot\left(\delta\cdot(1-a)^{-3/2}-a^{-3/2}\right) \\
  &= \frac{1}{4}\cdot(1+c)\cdot(1-a)\cdot\left(-\delta\cdot(1-a)^{-3/2}+a^{-3/2}\right)> 0.
\end{align*}
The last inequality follows from the fact that $-\delta\cdot(1-x)^{-3/2}+x^{-3/2}$ is convex in $x$ when $x\in(0,z)$.
On the other hand, we prove that the optimal solution can never be $a_1=z$ by contradiction. 

Assume that $a_1=z$, which implies that $u_1'(a)<0$ on $a\in[0,z)$. Since we have $u_1(a_1)=0$, we can get the following system of inequalities on $\delta$ and $z$
\begin{equation}
\label{eq:z for consistency}
\left\{
    \begin{aligned}
    &(1+c)\cdot(\delta\cdot(1-z)^{1/2}-{z}^{1/q})-1+2\cdot z+c=0\\
    &\frac{1}{2}\cdot(1+c)\cdot\left(-\delta\cdot(1-z)^{-1/2}-z^{-1/2}\right)+2<0.
    \end{aligned}
    \right.
\end{equation}
We multiply $(1-z)$ to both side of the second inequality of \eqref{eq:z for consistency}, which gives us
\begin{equation}
\label{eq:consistency inequality of z}
    \frac{1}{2}\cdot(1+c)\cdot\left(-\delta\cdot(1-z)^{1/2}+z^{1/2}-z^{-1/2}\right)+2-2z<0.
\end{equation}
By adding the first equation in \eqref{eq:z for consistency} to both side of \eqref{eq:consistency inequality of z}, we get
\begin{equation}
\label{eq:consistency inequality of z(2)}
    (1+c)\cdot\left(\frac{1}{2}\cdot\left(\delta\cdot(1-z)^{1/2}-z^{1/2}\right)-\frac{1}{2}\cdot z^{-1/2}+1\right)<0.
\end{equation}
However,
\begin{align*}
   &\frac{1}{2}\cdot\left(\delta\cdot(1-z)^{1/2}-z^{1/2}\right)-\frac{1}{2}\cdot z^{-1/2}+1\\
   =&\frac{1}{2}\cdot\left(z^{-3/2}\cdot(1-z)^2-z^{1/2}\right)-\frac{1}{2}\cdot z^{-1/2}+1\\
   =&z^{-3/2}\cdot\left(\frac{1}{2}\cdot\left((1-z)^2-z^2\right)-\frac{1}{2}\cdot z\right)+1\\
   =&z^{-3/2}\cdot\left(1-2z-(1-z)\cdot\frac{1}{2}\right)+1\\
   \ge& -z^{-1/2}\cdot\frac{1}{2}+1\\
   \ge& -\frac{1}{2z}+1\\
   \geq& 0.
\end{align*}
The equation in the second line follows from the fact that $\delta\cdot(1-z)^{-3/2}=z^{-3/2}$. The inequalities in the 5-th, 6-th, and 7-th lines hold, since $q\ge 1$ and $z\leq1/2$. Hence, we rule out the possibility of $a_1\in\{0,z\}$. I.e., either there is a unique $a_1$ on $a\in\left[0,min\{z,\frac{1-c}{2}\}\right]$ such that $u_1'(a)=0$ when $a=a_1$, or $a_1=\frac{1-c}{2}$.
\end{proof}

\subsection{Missing Proof of Theorem~\ref{thm:consistency_guarantee}}
\begin{proof}
First, we compute for $a'_1$ such that $u_1'(a'_1)=0$ and $u_1(a'_1)=0$. The corresponding $a'_1$ is the optimal solution (i.e. $a_1=a'_1$) if $a'_1\leq \frac{1-c}{2}$, or otherwise, $a_1=\frac{1-c}{2}$. The equations for $u_1'(a'_1)=0$ and $u_1(a'_1)=0$ are
\begin{equation}
\label{eq:equations for a'_1}
\left\{
    \begin{aligned}
    &\left(1+c\right)\cdot\left(\delta_1\cdot(1-a'_1)^{1/2}-{(a'_1)}^{1/2}\right)-1+2\cdot a'_1+c=0\\
    &\frac{1}{2}\cdot\left(1+c\right)\cdot\left(\delta_1\cdot(1-a'_1)^{-1/2}-(a'_1)^{-1/2}\right)+2=0.
    \end{aligned}
    \right.
\end{equation}
We multiply $2(a'_1-1)$ on both side of the second equation in~\eqref{eq:equations for a'_1} and get
\begin{equation}
\label{eq:second equation for a'_1}
    \left(1+c\right)\cdot\left(\delta_1\cdot(1-a'_1)^{1/2}-(a'_1)^{1/2}+(a'_1)^{-1/2}\right)+4(a'_1-1)=0.
\end{equation}
The first equation gives us $(1+c)\cdot(\delta_1\cdot(1-a'_1)^{1/2}-(a'_1)^{1/2})=1-2\cdot a'_1+c$. By plugging it in equation~\eqref{eq:second equation for a'_1}, we get
\begin{equation}
\label{eq:third equation for a'_1}
    2\cdot a'_1+(1+c)\cdot (a'_1)^{-1/2}-3-c=0.
\end{equation}
Let $t_1=(a'_1)^{1/2}$, we can derive
\begin{align*}
&2t_1^2+(1+c)/ t_1-3-c=0\\
\Leftrightarrow& 2t_1^3-(3+c)\cdot t_1 +(1+c)=0\\
\Leftrightarrow& (t_1-1)\cdot\left(2t_1^2+2t_1-(1+c)\right)=0\\
\Leftrightarrow& t_1 = \frac{-1+\sqrt{3+2c}}{2}.
\end{align*}
Therefore, $a'_1=t_1^2=(2+c-\sqrt{3+2c})/2$. We compare $a'_1$ with $\frac{1-c}{2}$, which gives us $a_1=a'_1=t_1^2=(2+c-\sqrt{3+2c})/2$ when $c\in[0,1/2)$ and $a_1=(1-c)/2$ when $c\in[1/2,1)$. Next, by plugging the value of $a_1$ into $u_1(a_1)=0$, we can obtain the value of $\delta_1$ and $\lambda_1$. 

When $c\in[0,1/2)$, we have
\begin{align*}
  \delta_1^2&=\left(\frac{1-2a_1-c}{1+c}+a_1^{1/2}\right)^2/(1-a_1)\\
  &=\frac{1}{(1+c)^2}\cdot\left(1-2a_1-c+a_1^{1/2}\cdot(1+c)\right)^2/(1-a_1) \\
  &=\frac{1}{(1+c)^2}\cdot\left(\frac{-3-5c+(3+c)\cdot\sqrt{3+2c}}{2}\right)^2/\left(\frac{-c+\sqrt{3+2c}}{2}\right) \\
  &=\frac{2c^3+40c^2+66c+36-(18+36c+10c^2)\cdot\sqrt{3+2c}}{-2c\cdot(1+c)^2+2(1+c)^2\cdot\sqrt{3+2c}} \\
  &=\frac{c^3+20c^2+33c+18-(9+18c+5c^2)\cdot\sqrt{3+2c}}{-c\cdot(1+c)^2+(1+c)^2\cdot\sqrt{3+2c}}.
\end{align*}
Therefore,
\begin{align*}
  \lambda_1&=(1+\delta_1^2)^{-1/2}\\
  &=\left(1+\frac{c^3+20c^2+33c+18-(9+18c+5c^2)\cdot\sqrt{3+2c}}{-c\cdot(1+c)^2+(1+c)^2\cdot\sqrt{3+2c}}\right)^{-1/2}\\
  &=\left(\frac{18c^2+32c+18-(4c^2+16c+8)\cdot\sqrt{3+2c}}{-c\cdot(c+1)^2+(1+c)^2\cdot\sqrt{3+2c}}\right)^{-1/2}\\
  &=(c+1)\cdot\left(4\sqrt{2c+3}\cdot c+6\sqrt{2c+3}-10c-8\right)^{-1/2}.
\end{align*}

When $c\in[1/2,1)$, we have
\begin{align*}
  \delta_1^2&=\left(\frac{1-2a_1-c}{1+c}+a_1^{1/2}\right)^2/(1-a_1)
  =\frac{a_1}{1-a_1}
  =\frac{1-c}{1+c},
\end{align*}
and
\begin{align*}
  \lambda_1=\left(1+\delta_1^2\right)^{-1/2}=\sqrt{\frac{c+1}{2}}.
\end{align*}
\end{proof}

\subsection{Proof of Theorem~\ref{thm:robustness_guarantee}}
\begin{proof}

We follow the same reasoning as in the proof of Theorem~\ref{thm:consistency_guarantee}. First, we apply Lemma~\ref{lm:expression for local optimum} to express $g(\loci)$ as a function of $x_i$ (defined in~\eqref{eq:relaxed main}) for each $\loci\in\locs$. 
Then, we modify the constraint of~\eqref{eq:main} into
\begin{align}
    \label{eq: robustness relaxed_median_constraint}
    \sum\limits_{i\in[n]}x_i=\sum\limits_{i\in[n]}\energy[S(\loci)]=
    \sum_{i\in[n]}\sum_{j:p_{i,j}\ge 0^+}f_j^2
    =\sum_{j\in[d]}f_j^2\cdot\sum_{i:p_{i,j}\ge 0^+}1
    =\frac{n}{2}\cdot\sum_{j\in[d]}f_j^2=\frac{1+c}{2}\cdot n.
\end{align}
Thus, we get the following analog of~\eqref{eq:relaxed main} for q=2. 
\begin{align}
 \label{eq:robustness relaxed main}
 \min\limits_{\vx}& \sum_{i\in[n]} h(x_i)
 &\text{where }h(x_i)
 \eqdef \lambda\cdot\left(\delta\cdot(1-x_i)^{1/2}- x_i^{1/2}\right)\nonumber\\
 \text{s.t.}&  \sum_{i\in [n]}x_i = \frac{1+c}{2}\cdot n,&
\quad\quad \forall i\in[n]~x_i\in[0,1]
\end{align}
Again, we let $z\eqdef \delta^{-2/3}/(\delta^{-2/3}+1)<1/2$ for $q=2$ and by Lemma~\ref{optimal value of x_i} and the same argument for the median mechanism, the optimal value of $x_i$ lies in $\{ a,1\}$, where $a\in[0,z]$ and the space of optimal solution can be narrowed down to 2 parameters $a$ and $|\{i: x_i=a\}|$.

Using the constraint $\sum_{i=1}^n x_i = (1+c)\cdot n/2$, the number $|\{i: x_i=a\}|$ and $|\{i: x_i=1\}|$ must be, respectively, $\frac{1-c}{2(1-a)}\cdot n$ and $\frac{1-2a+c}{2(1-a)}\cdot n$. The number $|\{i: x_i=1\}|\geq 0$ implies that $1-2a+c\geq 0$. Therefore, we need to find $\lambda$ such that
\begin{multline}
    \label{eq: robustness optimization over a}   \min_{a\in\left[0,\min\{z,\frac{1+c}{2}\}\right]}\left[\frac{1-c}{2(1-a)}\cdot n\cdot h(a) + \frac{1-2a+c}{2(1-a)}\cdot n\cdot h(1)\right]\\
    =\min_{a\in\left[0,\min\{z,\frac{1+c}{2}\}\right]}\left[\frac{n\cdot\lambda}{2(1-a)}\cdot\left((1-c)\cdot\left(\delta\cdot(1-a)^{1/q}-a^{1/q}\right)-1+2\cdot a-c\right)\right]\ge 0.
\end{multline}
Thus, we need to minimize $$u_2(a)\eqdef(1-c)\cdot\left(\delta\cdot(1-a)^{1/q}-a^{1/q}\right)-1+2\cdot a-c.$$
Let $a_2\eqdef \argmin\limits_{a\in\left[0,\min\{z,\frac{1+c}{2}\}\right]} u_2(a)$. We observe that
\begin{lemma}
    \label{lm:robustness optimal solution for a_2}
    If $u_2(a_2)=0$, then $a_2$ is optimal $\Rightarrow u_2'(a_2)=0\ or\ a_2=\frac{1+c}{2}$.
\end{lemma}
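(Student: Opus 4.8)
The plan is to mirror, essentially verbatim, the proof of Lemma~\ref{lm:consistency optimal solution for a_1}, since the robustness optimization problem~\eqref{eq: robustness optimization over a} has exactly the same structure as its consistency counterpart~\eqref{eq: consistency optimization over a} -- the only differences are that the coefficient $(1+c)$ is replaced by $(1-c)$ and the additive constant $-1+2a+c$ is replaced by $-1+2a-c$, and that the feasibility cap on $a$ is now $\min\{z,\tfrac{1+c}{2}\}$ coming from the constraint $|\{i:x_i=1\}|=\tfrac{1-2a+c}{2(1-a)}n\ge 0$, i.e. $a\le\tfrac{1+c}{2}$. So the critical-point condition to establish is: if $u_2(a_2)=0$ at the minimizer $a_2\in[0,\min\{z,\tfrac{1+c}{2}\}]$, then either the interior first-order condition $u_2'(a_2)=0$ holds, or $a_2$ sits on the boundary $a_2=\tfrac{1+c}{2}$ (the endpoints $a_2=0$ and $a_2=z$ being ruled out).

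First I would compute $u_2'(a)=\tfrac12(1-c)\bigl(-\delta(1-a)^{-1/2}-a^{-1/2}\bigr)+2$ and observe $u_2'(a)\to-\infty$ as $a\to0^+$, so $a=0$ is never optimal. Next I would set $w_2(a)\eqdef u_2'(a)$ and show $w_2$ is strictly increasing on $(0,z)$: differentiating gives $w_2'(a)=\tfrac14(1-c)(1-a)\bigl(-\delta(1-a)^{-3/2}+a^{-3/2}\bigr)$, and the sign is positive because $-\delta(1-x)^{-3/2}+x^{-3/2}$ is convex in $x$ on $(0,z)$ and hence, combined with the defining equation $\delta(1-z)^{-3/2}=z^{-3/2}$ at $x=z$, stays nonnegative on $(0,z)$. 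This monotonicity means $u_2'(a)=0$ has at most one root in $(0,z)$. The remaining task is to exclude $a_2=z$: assuming $a_2=z$ forces $u_2'(z)<0$ together with $u_2(z)=0$, and from these two relations -- after multiplying the inequality by $(1-z)$ and adding the equation $u_2(z)=0$ -- I would derive the inequality $(1-c)\bigl(\tfrac12(\delta(1-z)^{1/2}-z^{1/2})-\tfrac12 z^{-1/2}+1\bigr)<0$, and then show the bracketed quantity is in fact $\ge 0$. This last step is the same chain of algebraic estimates as in Lemma~\ref{lm:consistency optimal solution for a_1}: substitute $\delta(1-z)^{1/2}=z^{-3/2}(1-z)^2$ (from $\delta(1-z)^{-3/2}=z^{-3/2}$), collect terms to get $z^{-3/2}\bigl(1-2z-\tfrac{1-z}{2}\bigr)+1$, bound this below using $z\le\tfrac12$ to reach $-\tfrac{1}{2z}+1\ge 0$. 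Since $1-c>0$, this contradicts the strict inequality, ruling out $a_2=z$.

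The main (minor) obstacle is bookkeeping: one must make sure the extra boundary point $a=\tfrac{1+c}{2}$ is genuinely allowed as an optimum (it is, hence it appears in the conclusion of the lemma, exactly parallel to $a_1=\tfrac{1-c}{2}$ in Lemma~\ref{lm:consistency optimal solution for a_1}), and that the factor $(1-c)>0$ (valid since $c\in[0,1)$) preserves inequality directions throughout -- note this is where the sign differs superficially from the consistency case but the logic is identical. No new idea is needed beyond the substitution $\delta(1-z)^{-3/2}=z^{-3/2}$ and the convexity of $-\delta(1-x)^{-3/2}+x^{-3/2}$ on $(0,z)$. I would therefore state the proof as ``identical to the proof of Lemma~\ref{lm:consistency optimal solution for a_1} with $(1+c)$ replaced by $(1-c)$, $-1+2a+c$ replaced by $-1+2a-c$, and the constraint $a\le\tfrac{1-c}{2}$ replaced by $a\le\tfrac{1+c}{2}$,'' and defer the fully-written version to the appendix, exactly as the paper does for the consistency lemma.
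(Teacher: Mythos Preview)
Your proposal is correct and takes essentially the same approach as the paper, which simply states that the proof is identical to that of Lemma~\ref{lm:consistency optimal solution for a_1} after replacing $c$ with $-c$. Your write-up spells out the details of that substitution faithfully (including the key observation that $1-c>0$ preserves all inequality directions), so there is nothing to add.
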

The proof of Lemma~\ref{lm:robustness optimal solution for a_2} is essentially the same as the proof of Lemma~\ref{lm:consistency optimal solution for a_1} (we simply replace $c$ with $-c$ in the proof of Lemma~\ref{lm:consistency optimal solution for a_1}).

Lastly, we need to find $\lambda_2$ (analog of $\lambda^*$ from the proof of Theorem~\ref{thm:UB}) such that the minimum of \eqref{eq:robustness relaxed main} and respectively \eqref{eq: robustness optimization over a} is equal to $0$, which ensures us an $1/\lambda_2$-robustness guarantee. By Lemma~\ref{lm:robustness optimal solution for a_2}, the optimal $\lambda_2$ and $a_2$ must satisfy (i) $u_2(a_2)=0$ and (ii) $u_2'(a_2)=0$ or $a_2=(1+c)/2$. Then by denoting $\delta_2=\delta(\lambda_2)$, we can solve for $a_2$, $\delta_2$, $\lambda_2$ in the same way as before.

First, we compute for $a'_2$ such that $u_2'(a'_2)=0$ and $u_2(a'_2)=0$. The corresponding $a'_2$ is the optimal solution (i.e. $a_2=a'_2$) if $a'_2\leq \frac{1+c}{2}$, or otherwise, $a_2=\frac{1+c}{2}$. The equations for $u_2'(a'_2)=0$ and $u_2(a'_2)=0$ are
\begin{equation}
\label{eq:equations for a_2}
\left\{
    \begin{aligned}
    &\left(1-c\right)\cdot\left(\delta_2\cdot(1-a'_2)^{1/2}-{(a'_2)}^{1/2}\right)-1+2\cdot a'_2-c=0\\
    &\frac{1}{2}\cdot\left(1-c\right)\cdot\left(\delta_2\cdot(1-a'_2)^{-1/2}-(a'_2)^{-1/2}\right)+2=0.
    \end{aligned}
    \right.
\end{equation}
We multiply $2(a'_2-1)$ on both side of the second equation in~\eqref{eq:equations for a_2} and get
\begin{equation}
\label{eq:second equation for a_2}
    \left(1-c\right)\cdot\left(\delta_2\cdot(1-a'_2)^{1/2}-(a'_2)^{1/2}+(a'_2)^{-1/2}\right)+4(a'_2-1)=0.
\end{equation}
The first equation gives us $(1-c)\cdot(\delta_2\cdot(1-a'_2)^{1/2}-(a'_2)^{1/2})=1-2\cdot a'_2+c$. By plugging it in equation~\eqref{eq:second equation for a_2}, we get
\begin{equation}
\label{eq:third equation for a_2}
    2\cdot a'_2+(1-c)\cdot (a'_2)^{-1/2}-3+c=0
\end{equation}
Let $t_2=(a'_2)^{1/2}$, we get
\begin{align*}
&2t_2^2+(1-c)/t_2-3+c=0\\
\Leftrightarrow& 2t_2^3-(3-c)\cdot t_2 +(1-c)=0\\
\Leftrightarrow& (t_2-1)\cdot\left(2t_2^2+2t_2-(1-c)\right)=0\\
\Leftrightarrow& t_2 = \frac{-1+\sqrt{3-2c}}{2}.
\end{align*}
Therefore, $a'_2=t_2^2=(2-c-\sqrt{3-2c})/2$. Since $a'_2\leq\frac{1+c}{2}$ on $c\in[0,1)$, we get $a_2=a_2'$.

Next, by plugging the value of $a_2$ into $u_2(a_2)=0$, we can obtain the value of $\delta_2$ and $\lambda_2$. 

\begin{align*}
  \delta_2^2&=\left(\frac{1-2a_2+c}{1-c}+a_2^{1/2}\right)^2/(1-a_2)\\
  &=\frac{1}{(1-c)^2}\cdot\left(1-2a_2+c+a_2^{1/2}\cdot(1-c)\right)^2/(1-a_2) \\
  &=\frac{1}{(1-c)^2}\cdot\left(\frac{-3+5c+(3-c)\cdot\sqrt{3-2c}}{2}\right)^2/\left(\frac{c+\sqrt{3-2c}}{2}\right) \\
  &=\frac{-2c^3+40c^2-66c+36-(18-36c+10c^2)\cdot\sqrt{3-2c}}{2c\cdot(1-c)^2+2(1-c)^2\cdot\sqrt{3-2c}} \\
  &=\frac{-c^3+20c^2-33c+18-(9-18c+5c^2)\cdot\sqrt{3-2c}}{c\cdot(1-c)^2+(1-c)^2\cdot\sqrt{3-2c}}.
\end{align*}
Therefore,
\begin{align*}
  \lambda_2&=(1+\delta_2^2)^{-1/2}\\
  &=\left(1+\frac{-c^3+20c^2-33c+18-(9-18c+5c^2)\cdot\sqrt{3-2c}}{c\cdot(1-c)^2+(1-c)^2\cdot\sqrt{3-2c}}\right)^{-1/2}\\
  &=\left(\frac{18c^2-32c+18-(4c^2-16c+8)\cdot\sqrt{3-2c}}{c\cdot(c-1)^2+(1-c)^2\cdot\sqrt{3-2c}}\right)^{-1/2}\\
  &=(1-c)\cdot\left(-4\sqrt{3-2c}\cdot c+6\sqrt{3-2c}+10c-8\right)^{-1/2}.
\end{align*}
\end{proof}

\end{document}